\newcommand{\R}{\mathbb{R}}
\newcommand{\C}{\mathbb{C}}
\newcommand{\Z}{\mathbb{Z}}
\newcommand{\Q}{\mathbb{Q}}
\newcommand{\rank}{\operatorname{rank}}
\newcommand{\imag}{\operatorname{i}}
\newcommand{\Abs}[1]{\left| #1 \right|}
\newcommand{\pfrac}[2]{\frac{\partial #1}{\partial #2}}
\newcommand{\noicefrac}[2]{#1/#2}
\newtheorem{Theorem}{Theorem}
\newtheorem{Lemma}[Theorem]{Lemma}
\newtheorem{Proposition}[Theorem]{Proposition}
\newtheorem{Corollary}[Theorem]{Corollary}
\newtheorem{Definition}[Theorem]{Definition}
\newtheorem{Remark}[Theorem]{Remark}
\newenvironment{remark}{\begin{Remark} \begin{rm}}{\end{rm} \end{Remark}}
\newenvironment{theorem}{\begin{Theorem} \begin{sl}}{\end{sl}
                         \end{Theorem}}
\newenvironment{lemma}{\begin{Lemma} \begin{sl}}{\end{sl} \end{Lemma}}
\newenvironment{proposition}{\begin{Proposition}
        \begin{sl}}{\end{sl} \end{Proposition}}
\title{Dynamics near the $p:-q$ Resonance}
\author{Sven Schmidt$^{1,3}$ and Holger R. Dullin$^{4}$\\ \\
\parbox{8cm}{
\begin{center}
$^{1}$School of Mathematics\\Loughborough University\\LE11 3TU, Loughborough, UK \\
\end{center}
\begin{center}
$^{3}$Schlumberger Oilfield UK Plc\\Abingdon Technology Centre\\OX14 1UJ, Abingdon, UK \\
{\tt sschmidt@abingdon.oilfield.slb.com}
\end{center}
}
\hfill
\parbox{8cm}{
\begin{center}
$^{4}$School of Mathematics and Statistics\\The University of Sydney\\Sydney NSW 2006, Australia
{\tt hdullin@usyd.edu.au}
\end{center}
}
}
\date{\today}
\newcommand{\cred}{}
\begin{document}

\maketitle

\begin{abstract}
We study the dynamics near the truncated $p:\pm q$ resonant Hamiltonian equilibrium for $p$, $q$ coprime.
The critical values of {\cred{}the} momentum map of the Liouville integrable system are found.
The three basic objects reduced period, rotation number, and non-trivial action for the leading order dynamics
are computed in terms of complete hyperelliptic integrals. A relation between the three functions that can
be interpreted as a decomposition of the rotation number into geometric and {\cred{}dynamic} phase is found.
Using this relation we show that the $p:-q$ resonance has fractional monodromy. Finally we prove that near the
origin of the $1:-q$ resonance the twist vanishes.

\vspace{0.5cm}
\noindent Key words: fractional monodromy, resonant oscillator, vanishing twist, singular reduction

\vspace{0.5cm}
\noindent MSC2000 numbers:  37J15, 37J20, 37J35, 37J40, 70K30
\end{abstract}

\section{Introduction}

The phase space of an integrable system is foliated into invariant tori almost everywhere. Exceptions occur when the integrals are not independent, i.e.~when the energy-momentum map has critical values. Monodromy is a manifestation of the singular nature of the fibre over a critical value of the energy--momentum mapping. The term was introduced in Duistermaat's 1980 paper~\cite{Duistermaat1980} as the simplest obstruction to the existence of global and smooth action--angle coordinates. Consider a closed, non--contractible loop $\Gamma$ in the image of the energy--momentum mapping consisting entirely of regular values. Assume the loop is traversed from $\Gamma(0)$ to $\Gamma(1)$ where $\Gamma(0) = \Gamma(1)$. For each regular value $\Gamma(s)$ the period lattice $P_{\Gamma(s)}$ gives the periods of the flows generated by energy and momentum, and thus encodes the transformation to action-angle variables. The period lattice $P_{\Gamma(s)}$ at $\Gamma(0)$ and at $\Gamma(1)$ are related by a unimodular transformation. If this transformation is non-trivial the system is said to have monodromy. Cushman and Bates~\cite{CushmanBates1997} use the ratios of periods, i.e.~the rotation number, to compute monodromy.
Geometrically monodromy means that the torus bundle over the loop $\Gamma$ is non-trivial.

If the loop of regular values is contractible through regular values there is no monodromy. Thus there must be  {\cred{}at least one} critical {\cred{}value} inside the loop {\cred{}$\Gamma$ for monodromy to occur}, and the nature of the critical fibre in principle determines the monodromy. For non--degenerate focus--focus points the critical fibre is a pinched torus with $p$ pinches, and the monodromy matrix of a loop around the critical value of the focus-focus point is $\begin{pmatrix} 1 & 0 \\ p & 1\end{pmatrix}$ \cite{Matveev96,NTienZung1997}.


Recently Nekhoroshev~et~al.~\cite{NekhoroshevSadovskiiZhilinskii2002,NekhoroshevSadovskiiZhilinskii2006} found fractional monodromy in certain resonant oscillators. In fractional monodromy the change of basis of the period lattice is not {\cred{}an element of} $SL(2, \Z)$ but {\cred{}of} $SL(2, \Q)$ {\cred{}instead}. This would be incompatible with the Liouville-Arnold theorem which gives the actions uniquely up to transformations {\cred{}in} $SL(2, \Z)$. Thus for fractional monodromy the loop $\Gamma$ necessarily has to cross critical values. The critical fibre has to be such that it is possible to continuously pass a sub-lattice of the period lattice through it \cite{NekhoroshevSadovskiiZhilinskii2006,Efstathiou2006}, also see \cite{Efstathiou2005}. The first example was the $1:-2$ resonant Hamiltonian, with an appropriate compactification \cite{NekhoroshevSadovskiiZhilinskii2002,NekhoroshevSadovskiiZhilinskii2006,Efstathiou2006}. In \cite{Nekhoroshev07} and \cite{SMPJJ08} this has been extended to the compactified $p:-q$ resonance.

The present paper is based on the thesis~\cite{Schmidt08}.
It is a generalization from the $1:-2$ resonance as treated in \cite{RHCushmanHRDullinHHSS2007}
to the general $p:-q$ resonance.
We independently obtained results similar to those presented in~\cite{SMPJJ08}, but our approach is complementary.
There are two main differences.
The first main difference is that we base our approach on the computation of the action variables. We derive the remarkable formula relating the action variables $I_1$ and $I_2$ by
\begin{equation} \label{geophas}
  I_2 = \frac{p+q}{4 \pi} \Delta h T - I_1 W
\end{equation}
where $\Delta h$ is {\cred{}the} value of the non-linear part of the Hamiltonian,
$T$ is the period,
and $W$ is the corresponding rotation number.
This formula can be interpreted as decomposing the rotation number into a dynamical phase proportional to $T$ and a geometric phase proportional to the action $I_2$ \cite{Montgomery91}.
Then we show that $W$ changes by $1/(pq)$ upon traversing a loop around the equilibrium,
and thus by \eqref{geophas} the action $I_2$ changes by $I_1/(pq)$ and the system has fractional monodromy.
The second main difference is that we analyze the dynamics near the equilibrium point in detail. The period, rotation number, and action are dominated by certain singular contributions, and we show that sufficiently close to the equilibrium point these are the leading order contributions. This allows us to perform the computation without any compactification, thereby proving the conjecture made in \cite{NekhoroshevSadovskiiZhilinskii2006}, that
fractional monodromy is independent of the details of the compactification.
Finally we obtain the new result that the isoenergetic non-degeneracy condition of the KAM theorem is violated
near the $1:-q$ resonance, i.e.~the system has vanishing twist near the equilibrium point.


The paper is structured as follows. First we derive the resonant Hamiltonian normal form near an elliptic--elliptic equilibrium point in $p:-q$ resonance which is then studied in the reduced phase space after applying singular reduction. After discussing the structure of the reduced phase space we derive the set of critical values of the energy--momentum mapping. The period of the reduced flow, the rotation number, and the non-trivial action of the system are computed in the next section.
We then prove fractional monodromy for the $p:-q$ resonance.
Finally we prove that the twist vanishes in $1:-q$ resonant systems close to the equilibrium point.

\section{Resonant Hamiltonain Equilibria}
\label{example}

Near a non-resonant elliptic equilibrium point a Hamiltonian can be transformed to Birkhoff normal form
up to arbitrary high order. The truncated Birkhoff normal form depends on the actions only, and to lowest
order these actions are those of the two harmonic oscillators of the diagonalized quadratic part of the Hamiltonian.
When the frequencies are $p: \pm q$ resonant, the normal form contains additional so called resonant terms that
depend on resonant linear combinations of the angles. Even though the non-linear non-resonant terms
are generically present, we will assume that all the non-linear non-resonant {\cred{}terms} up to the order $p+q$ are vanishing
Without this assumption the dynamics near the equilibrium would be dominated by the non-resonant
terms when $p+q\ge 5$. For the low order resonant cases no such assumption is necessary, and
therefore the results presented in  \cite{RHCushmanHRDullinHHSS2007} for the $1:-2$ resonance are
relevant for generic Hamiltonian systems. By contrast, the integrable systems studied in this paper have higher
codimension (increasing with $p+q$) in the class of Hamiltonian systems.

Denote a point in phase space $ T^{*}\R^{2}$ with coordinates  $( p_{1}, q_{1}, p_{2}, q_{2} ) $
and let the symplectic structure be $\omega = dq_{1} \wedge dp_{1} + dq_{2} \wedge dp_{2}$.
Assume the origin $(p_{1},q_{1},p_{2},q_{2})=(0,0,0,0) \in T^{*}\R^{2}$ is an elliptic equilibrium point of the system whose eigenvalues $\pm \imag \omega_{1}$, $\pm \imag \omega_{2}$ have ratio $p / q$, where $p$ and $q$ coprime positive integers. Then the quadratic part of the Hamiltonian $H$ near the equilibrium point can be brought into the form
\begin{equation}
	H_{2} = \frac{p}{2} \left( p_{1}^{2} + q_{1}^{2} \right)
	+ \sigma \frac{q}{2} \left( p_{2}^{2} + q_{2}^{2} \right) \,,
\end{equation}
by scaling $\omega_1$ to $p$ by a linear change of time.
For $\sigma = +1$ the quadratic Hamiltonian $H_2$ is definite, while for $\sigma = -1$ it is indefinite.
The system is said to be in $p : \pm q $ resonance, and mostly we are interested in the case $\sigma = -1$.

The classical treatment of resonant equilibria, see e.g.~\cite{VIArnoldVVKozlovAINeishtadt1997}, is as follows.
Denote $(A_{i}, \phi_{i})$, $i=1,2$, the canonical polar coordinates corresponding to $(p_{i}, q_{i})$. The resonant Birkhoff normal form
depends on $A_1$, $A_2$ and the resonant combination (the so called secular term) $-\sigma q\phi_1 + p \phi_2$.
The normal form Hamiltonian truncated at order $p+q$ is
\[
   H = p A_1 + \sigma q A_2 + \sum \mu_{ij} A_1^i A_2^j +  \mu \sqrt{A_1^q A_2^p} \cos (  - \sigma q\phi_1 + p \phi_2 + \varphi ) \,,
\]
where the phase $\varphi$ can be set to zero by a shift of the angles.
In order to reduce the number of variables a linear symplectic transformation to $(J_i, {\cred{}\psi_i})$ is performed.
The resonant combination is introduced as a new angle
${\cred{}\psi_2}$, while the cyclic angle ${\cred{}\psi_1}$ is conjugate to  $J_{1} = p A_{1} + \sigma q A_{2} = H_{2}$.
The complete transformation is given by $J = M^{-t} A$, ${\cred{}\psi} = M \phi$,
where $M$ contains two arbitrary integers $a$ and $b$ restricted by $\det M =  b p  -  \sigma a q = 1$.
The new Hamiltonian depends on $J_1, J_2, {\cred{}\psi_2}$ only, and is therefore integrable.
The lowest order term in ${\cred{}\psi_2}$ has coefficients proportional to $\sqrt{A_1^q A_2^p}$.
Setting the non-linear non-resonant terms $\mu_{ij}$ equal to zero the Hamiltonian becomes
\[
   H = J_1 + \mu \sqrt{(b J_1 - \sigma q J_2)^q (-a J_1 + p J_2)^p} \cos {\cred{}\psi_2}\,.
\]
Reverting back to the original Euclidean coordinates the resonant term reads
\begin{equation*}
	\Delta H = \Re \left[ ( p_{1} + \imag q_{1} )^{q} ( p_{2} - \sigma \imag q_{2} )^{p} \right] \,,
\end{equation*}
and the Hamiltonian simply is $H = H_2 + \mu \Delta H$.
The {\cred{}functions} $H_2$ and $\Delta H$ are in involution and independent almost everywhere, and thus the system is
Liouville integrable. This is the integrable system we are going to analyze in this paper.
Note that unlike previous work~\cite{NekhoroshevSadovskiiZhilinskii2006,Efstathiou2006,SMPJJ08}, we do not add higher order terms to $\Delta H$ to compactify the system. Moreover, we will study the integrable system defined by the Hamiltonian
$H$, instead of the Hamiltonian $\Delta H$. For the discussion of monodromy the two are
equivalent, since monodromy is a feature of the Liouville foliation, that does not depend on the dynamics
on the individual tori. Also, they share the same singularly reduced system.
However, we are also interested in the physically relevant dynamics of the Birkhoff normal form in the context of KAM theory, and therefore we analyse the Hamiltonian $H$ and not the Hamiltonian $\Delta H$.
In particular, when considering the rotation number of the full system the difference is crucial.
Nevertheless, we will study the momentum map $F = (\Delta H, H_2)$
(which is {\em not} the energy-momentum map for the Hamiltonian $H$),
and the Hamiltonian $H = H_2 + \mu \Delta H$ is a (linear) function of the momenta.

\section{Reduction}
\label{reduction}

In this section we review the steps necessary to reduce to a single degree of freedom using the symmetry $H_2$. We state only the results and refer the interested reader to the standard literature, for example Cushman~\cite{CushmanBates1997}, Abraham and Marsden~\cite{RAbrahamJEMarsden1978} and Broer~et.~al.~\cite{Broer2003}, for the $p:-q$ resonance in particular also see~\cite{Efstathiou2005}. These steps retrace the derivation reviewed in the previous section in a more geometric way.

The flow of the resonant quadratic part $H_2$ is the $S^{1}$--group action
\begin{equation}
\label{H2flow}
\begin{aligned}
\Phi^{H_{2}} : S^{1} \times \C^{2} & \longrightarrow \C^{2} \nonumber\\
\left( t, z_{1}, z_{2} \right)&  \mapsto \left( z_{1} \exp \left( p \imag t \right),
z_{2} \exp \left( \sigma q \imag t \right) \right), \quad z_{1}, \, z_{2} \in \C
\end{aligned}
\end{equation}
where $z_{i} = \frac{1}{\sqrt{2}} \left( p_{i} + \imag q_{i} \right)$, $i=1$, $2$. This action is non--degenerate except on the axis $z_{1}=0$ and $z_{2}=0$ on which points have isotropy subgroup $\Z_{q} \subset S^{1}$ and $\Z_{p} \subset S^{1}$, respectively (if $q,p > 1$). The invariants of this group action are
\begin{equation}
\pi_{1} = z_{1} \bar{z}_1, \quad \pi_{2} = z_{2} \bar{z}_{2}, \quad
\pi_{3} = \Re( z_{1}^{q} z_{2}^{p} ), \quad
\pi_{4} = \Im( z_{1}^{q} z_{2}^{p} )
\end{equation}
for $\sigma=-1$. For $\sigma = +1$ instead
$\pi_{3} = \Re( z_{1}^{q} \bar{z}_2^{p} ), \pi_{4} = \Im( z_{1}^{q} \bar{z}_{2}^{p} )$ are the invariants.

Because of the non-trivial isotropy we employ singular reduction in order to reduce the system to one degree of freedom. The reduced phase space
\begin{equation*}
	P_{h_{2}} = H_{2}^{-1}(h_{2}) / S^{1}
\end{equation*}
is given by the relation
\begin{equation}
\label{reduction:Equ1}
	\pi_{3}^{2} + \pi_{4}^{2} = \pi_{1}^{q} \pi_{2}^{p}, \quad \pi_{1} \ge 0, \;
	\pi_{2} \ge 0,
\end{equation}
hence it is a semi--algebraic variety in the invariants. The ambient Poisson space is endowed with the Poisson structure
\begin{alignat*}{3}
	\left\{ \pi_{1}, \pi_{2} \right\} &= 0, \quad& \left\{ \pi_{1}, \pi_{3} \right\} &= q \mspace{2mu} \pi_{4}\\
	\left\{ \pi_{1}, \pi_{4} \right\} &= - q \mspace{2mu} \pi_{3}, \quad &\left\{ \pi_{2}, \pi_{3} \right\} &= - \sigma  p \mspace{2mu} \pi_{4}\\
	\left\{ \pi_{2}, \pi_{4} \right\} &= p \mspace{2mu} \sigma \pi_{3}, \quad & \left\{ \pi_{3}, \pi_{4} \right\} &= {\textstyle \frac12} {\pi_{1}^{q-1} \pi_{2}^{p-1}}
	\left( \sigma p^{2} \pi_{1} - q^{2} \pi_{2} \right)
\end{alignat*}
with Casimir $\mathcal{H}_{2} = p \pi_{1} + q \sigma \pi_{2}$ and symplectic leaves the reduced phase spaces $P_{h_{2}}$.

Fixing the Casimir $\mathcal{H}_2 = h_{2}$ we choose to eliminate $\pi_{2}$ from~\eqref{reduction:Equ1} and thus get an equation
for the reduced phase space in the form
\begin{equation}
\label{reduction:Equ3}
	\pi_{3}^{2} + \pi_{4}^{2} =
	\pi_{1}^{q} \left(  \frac{ h_{2} - p \pi_{1}}{\sigma q } \right)^{p}.
\end{equation}
The interval of valid $\pi_{1}$ is determined by the requirements $\pi_{1} \ge 0$ and $\pi_{2} \ge 0$, thus
\begin{equation}
\label{intervals}
\begin{alignedat}{3}
	\sigma &= +1: & \pi_{1} &\in [0,\noicefrac{h_{2}}{p}]\\
	\sigma &= -1, \; h_{2} > 0: \; & \pi_{1} &\in [\noicefrac{h_{2}}{p}, \infty)\\
	\sigma &= -1, \; h_{2} \le 0: \; & \pi_{1} &\in [0, \infty).
\end{alignedat}
\end{equation}

\begin{figure}[tbhp]
\includegraphics[width=0.45\textwidth]{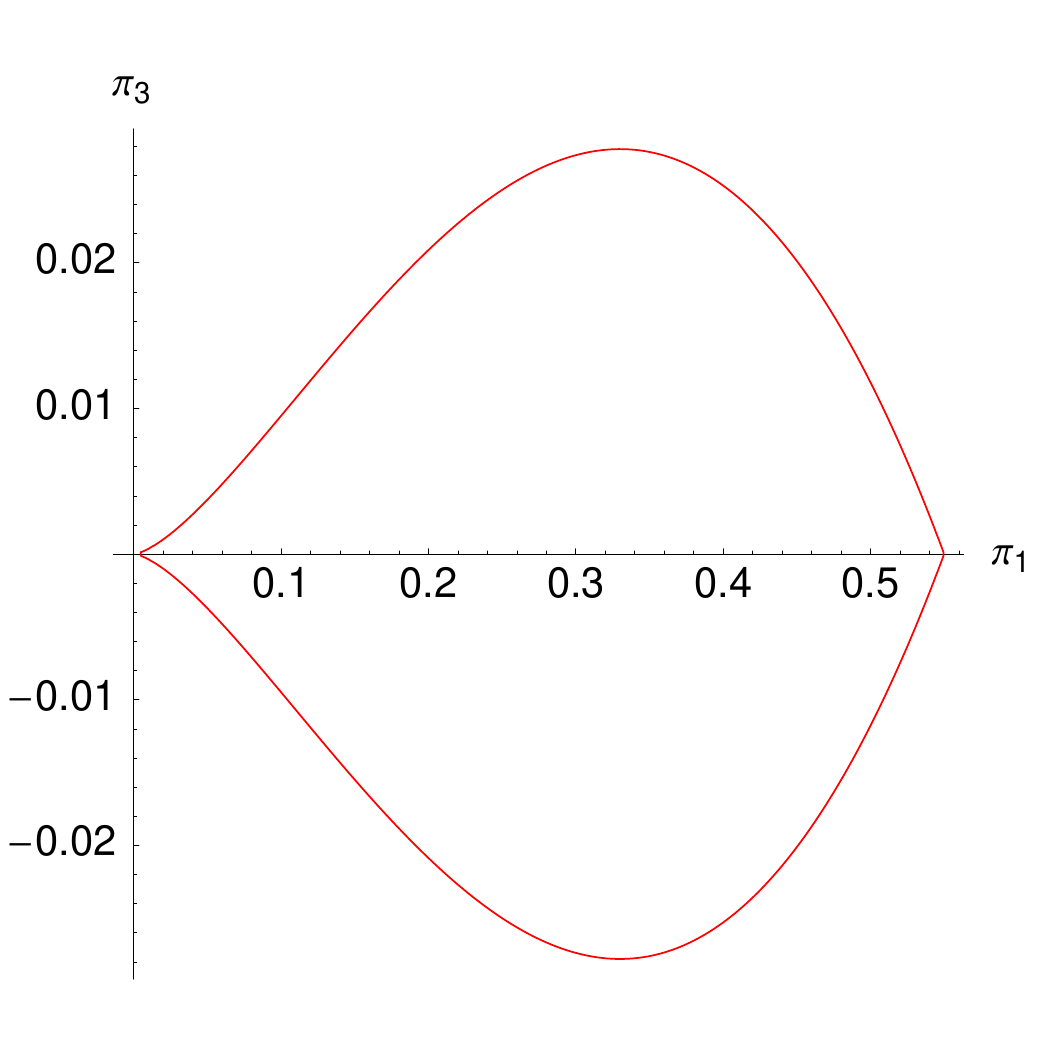}
\includegraphics[width=0.45\textwidth]{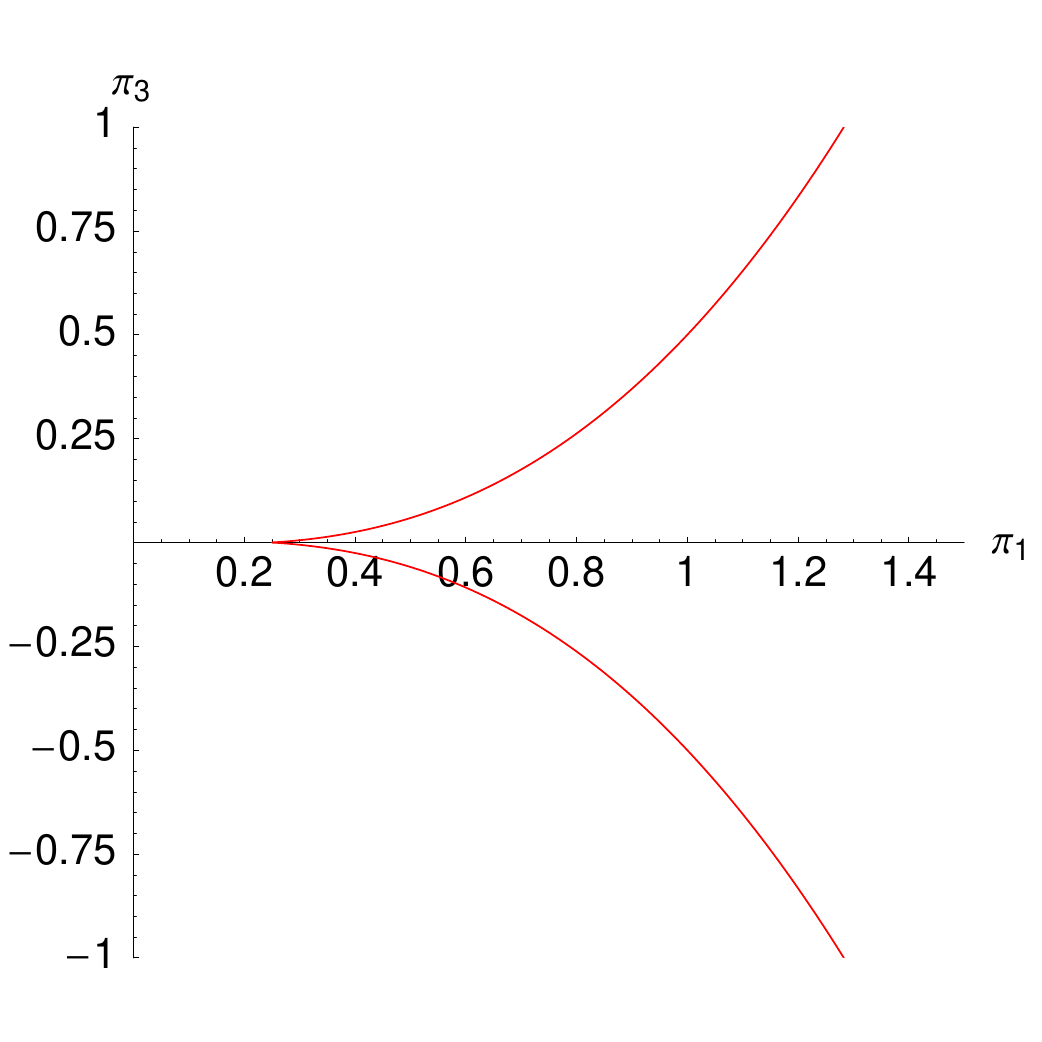} \\
\includegraphics[width=0.45\textwidth]{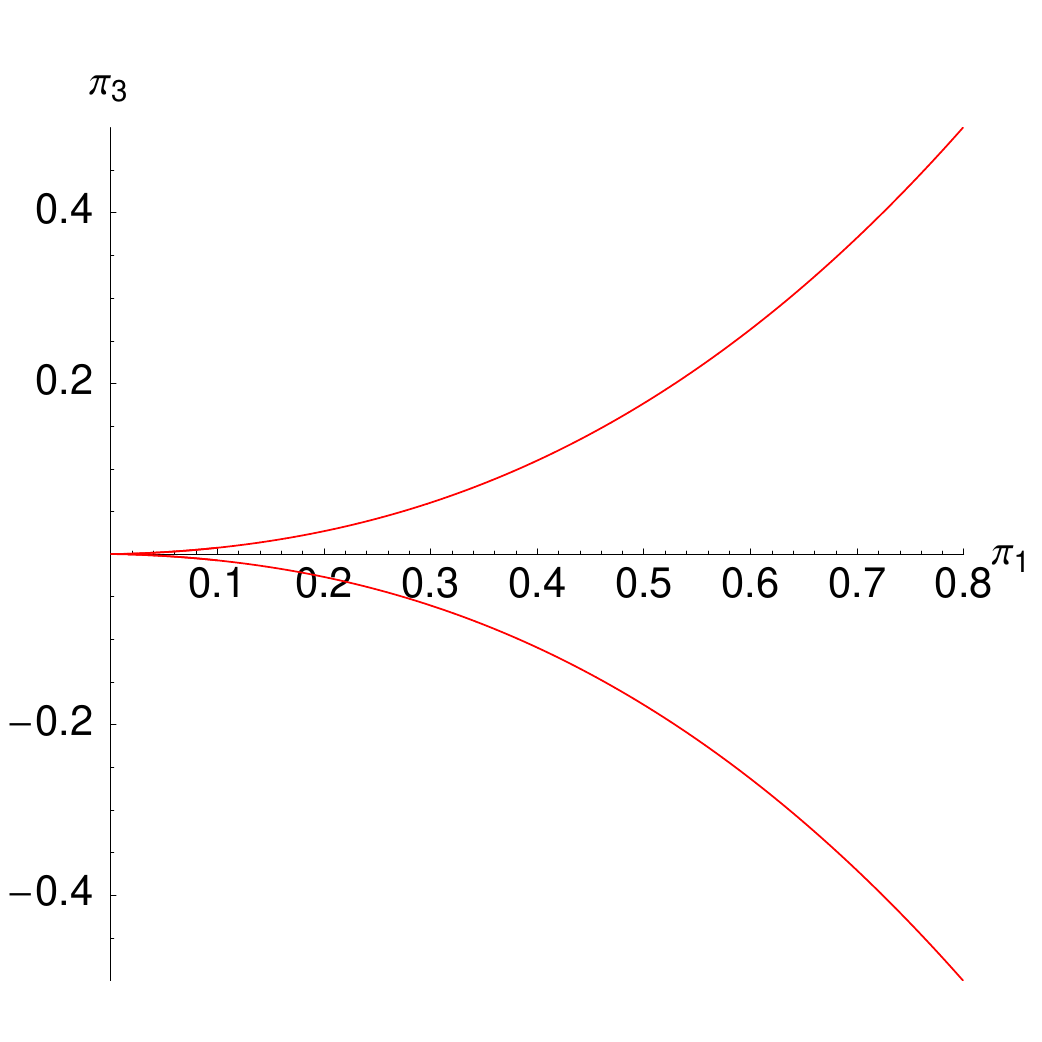}
\includegraphics[width=0.45\textwidth]{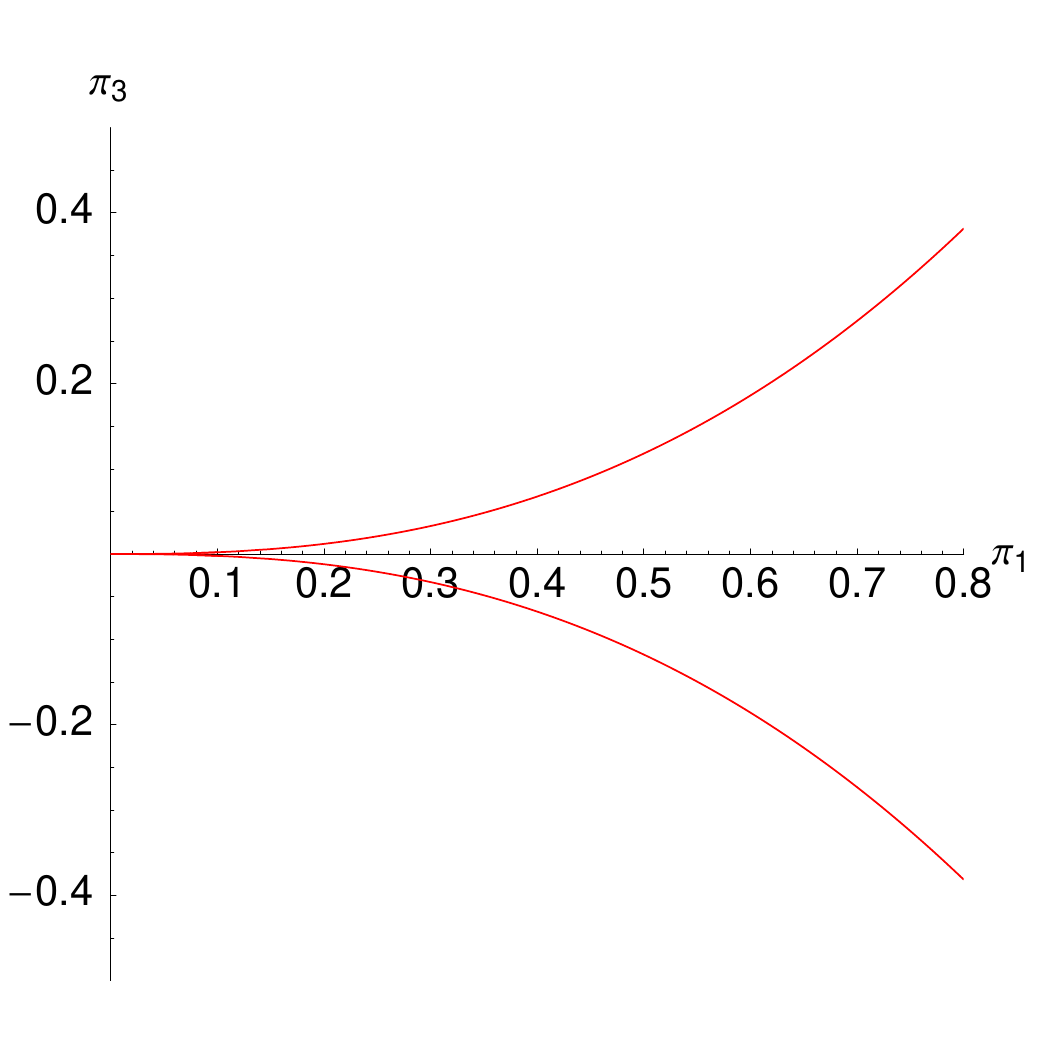}
\caption[The section $\pi_{4}=0$ of the reduced phase space $P_{h_{2}}$ for the $2: \pm 3$ resonance]{\label{reduction:Fig1}The section $\pi_{4}=0$ of the reduced phase space $P_{h_{2}}$ for the $2: \pm 3$ resonance. Upper left: $\sigma=+1$ and $h_2=1.1$. Upper right: $\sigma=-1$, $h_2=0.5$. Lower left: $\sigma=-1$, $h_2=-0.5$. Lower right: $\sigma=-1$, $h_2=0$.
For $h_2 \not = 0$ points $\pi_1 = 0$ are cusp singularities, and points $\pi_1 = h_2/2$ are conical singularities, both due to the non-trivial isotropy of $\Phi^{H_2}$.
 }
\end{figure}

Fig.~\ref{reduction:Fig1} shows sections $\pi_{4}=0$ of the rotationally symmetric reduced phase space $P_{h_{2}}$ for the $2: \pm 3$ resonance for all four relevant cases $\sigma=+1$, $\sigma=-1$ and $h_{2}>0$, $h_{2}<0$ and $h_{2}=0$. For $\sigma=+1$ the reduced phase space is compact with a cusp--singularity ($q=3$) at the origin $\pi_{1}=0$ and a conical singularity ($p=2$) at $\pi_{1} = \noicefrac{h_{2}}{2}$ (where $\pi_{2}=0$). These singularities in the reduced phase space are a result of the non--trivial isotropy of the group action $\Phi^{H_{2}}$ at $z_{1}=\pi_{1}=0$ and $z_{2}=\pi_{2}=0$.

For $\sigma = -1$ the reduced phase space is non-compact. The singular points are of the same type as in the compact case, but
they exist separately for positive or negative $h_2$. For $h_2 > 0$ there is a conical singularity ($p=2$) at $\pi_2 = 0$ and
for $h_2 < 0$ there is a cusp singularity ($q = 3$) at $\pi_1 = 0$. For $h_2$ there is {\cred{} a} singularity of order $p+q= 5$ at $\pi_1 = 0$
(in the compact case $\sigma = 1$ the reduced phase space is merely a point for $h_2 = 0$).

It follows from equation~\eqref{reduction:Equ3} that in the general $p:\pm q$ resonance there is a singularity of order
$q$ at $\pi_1 = 0$ and of order $p$ at $\pi_2 = 0$, assuming $h_2 \not = 0$. If $h_2 = 0$ the order of the singularity is $p+q$.
Note that in the above statement a singularity of order 1 means no singularity.

Expressed in the invariant the integral $\Delta H$ simply becomes $\pi_3$, so that the reduced Hamiltonian is
\begin{equation}
\label{reduction:Equ4}
    \mathcal{H}( \pi_{1}, \pi_{2}, \pi_{3} ) = p \pi_1 + \sigma q \pi_2 + \mu \pi_3 = h_2 + \mu \pi_3  \,.
\end{equation}
As mentioned earlier, the truncated resonant Birkhoff normal form of a generic resonant Hamiltonian system would also contain
terms $\pi_1^i \pi_2^j$, with $i + j \le (p+q)/2$, but in order to maximize the effect of the resonant term these are assumed to be zero.

\section{The Energy--Momentum Mapping}
\label{EMpq}

Let
\begin{equation*}
	F : ({\cred{}p_1}, {\cred{}q_1}, {\cred{}p_2}, {\cred{}q_2}) \mapsto \left( H_{2}({\cred{}p_1}, {\cred{}q_1}, {\cred{}p_2}, {\cred{}q_2}), \Delta H({\cred{}p_1}, {\cred{}q_1}, {\cred{}p_2}, {\cred{}q_2}) \right)
\end{equation*}
be the momentum map of the integrable system, and denoted its value by $(h_{2}, \Delta h)$.
The elliptic equilibrium at the origin is a critical point of the momentum map $F$ since both integrals are of order $\ge 2$ in $(p_{i}, q_{i})$, so that $\rank DF(0) = 0$. Furthermore, this point is {\em degenerate in the sense of the momentum map}~\cite{AVBolsinovATFomenko2004}. For this we need to show that the Hessians of $H_{2}$ and $\Delta H$ are not linear independent at $x=0$. This follows immediately from the fact that $\Delta H$ is of order $p + q \ge 3$ in $(p_{i}, q_{i})$ so that its Hessian vanishes identically at the origin.

\begin{figure}[tbh]
\includegraphics[width=0.45\textwidth]{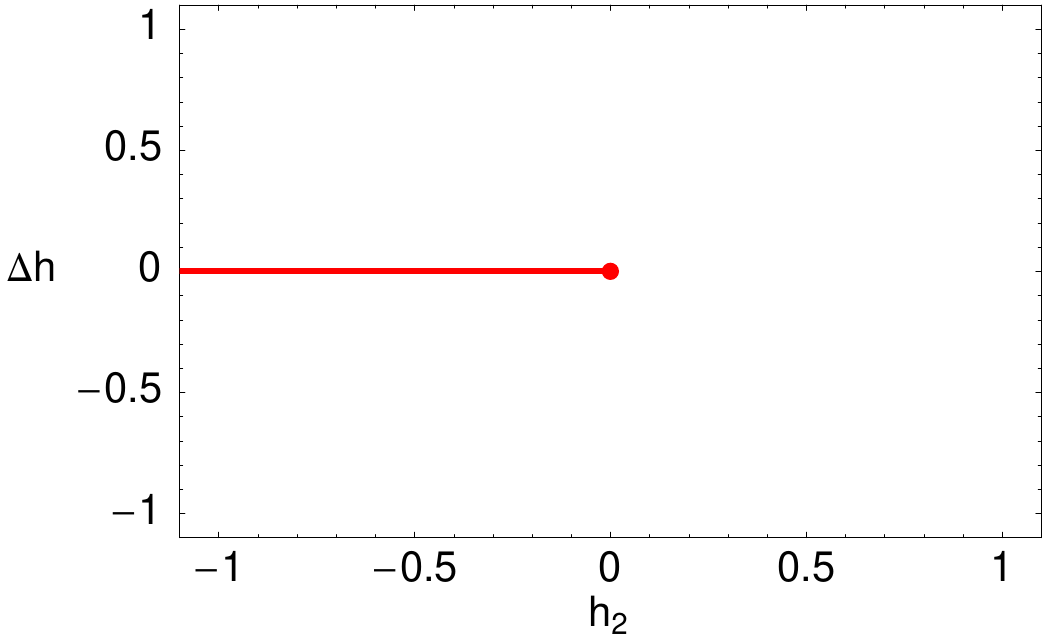}
\includegraphics[width=0.45\textwidth]{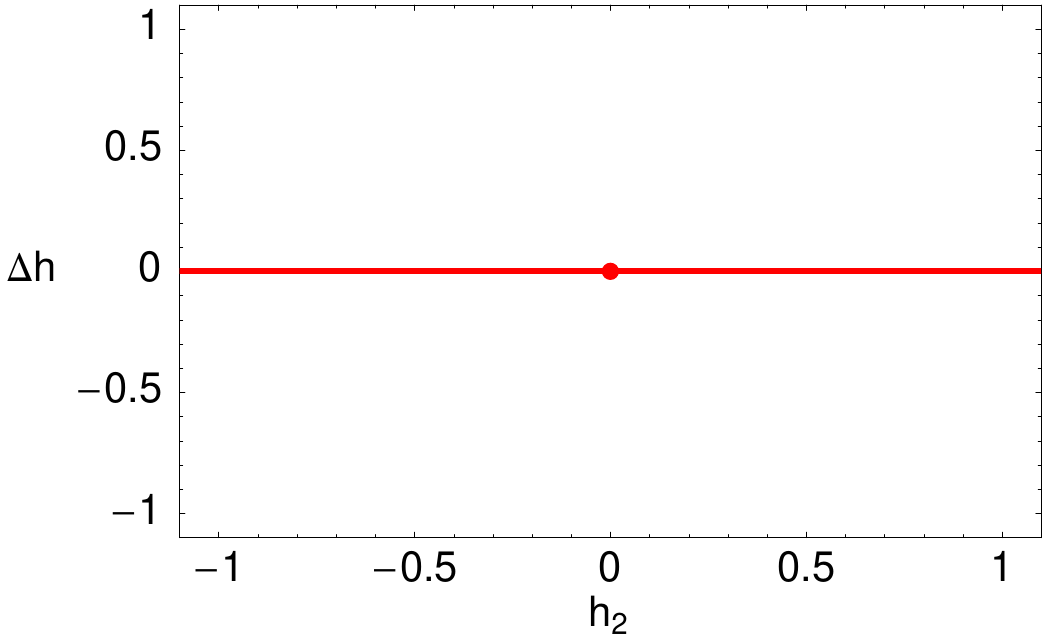} \\
\includegraphics[width=0.45\textwidth]{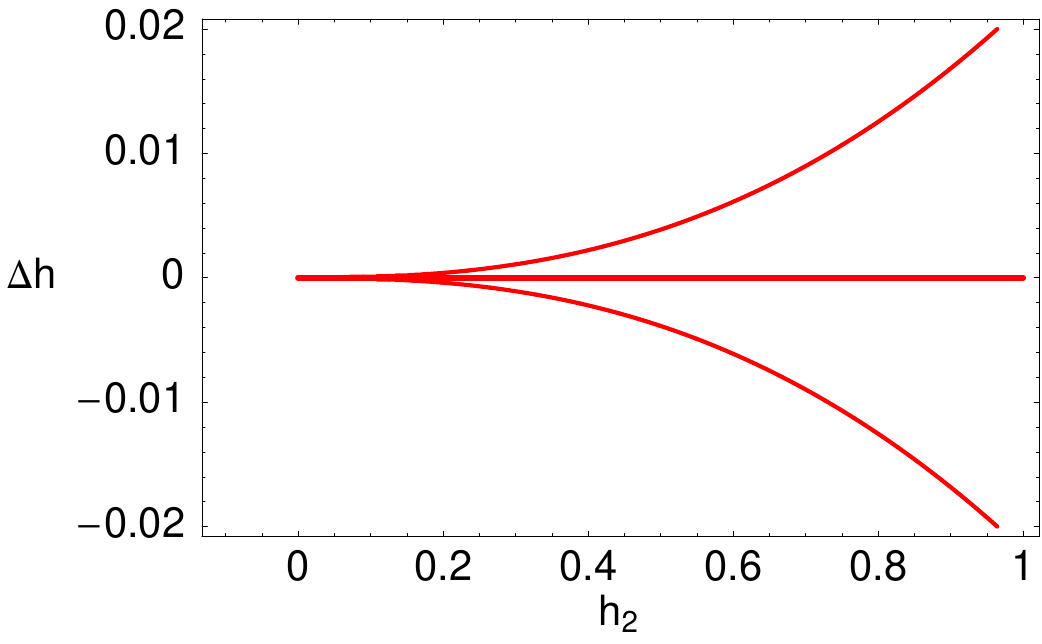}
\caption[The critical values of the energy--momentum mapping]{\label{EMpq:Fig1}The critical values $(h_{2}, \Delta h)$ of the energy--momentum mapping are shown as red branches. The top left picture corresponds to the $1: - q$ resonance, the top right one to the $p : - q$ resonance with $p > 1$. The degenerate equilibrium point is marked by a disk at the origin for $\sigma=-1$.
The lower left picture corresponds to the $2:3$ resonance with $\sigma = +1$.
}
\end{figure}

%

The critical values $(h_{2}, \Delta h)$ of the momentum mapping for the $p: \pm q$ resonance (i.e.~the bifurcation diagram) are shown in fig.~\ref{EMpq:Fig1}, and are described in the following lemma.

\begin{lemma}
The entire line $\Delta h = 0$ is critical for $p > 1$, $q > 2$ and $\sigma=-1$. For $p=1$, the critical values are given by $\Delta h = 0$ and $h_{2} \le 0$ and for $\sigma=+1$ they are given by  $\Delta h = 0$ and $h_{2} \ge 0$. Moreover, there are two additional branches emanating from the origin of the bifurcation diagram for $\sigma=+1$.
\end{lemma}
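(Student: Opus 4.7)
The plan is to reduce the question to a one–degree–of–freedom problem on the reduced phase space $P_{h_2}$. Since $\Delta H$ descends to the invariant $\pi_3$, a point $(h_2,\Delta h)$ is a critical value of $F=(H_2,\Delta H)$ if and only if the level set $\{\pi_3=\Delta h\}\cap P_{h_2}$ contains either a singular point of $P_{h_2}$ or a critical point of $\pi_3$ restricted to the smooth locus. Writing $P_{h_2}$ as a surface of revolution around the $\pi_1$–axis with squared radius
\begin{equation*}
R(\pi_1)^{2}=\pi_{1}^{q}\Bigl(\frac{h_{2}-p\pi_{1}}{\sigma q}\Bigr)^{p},
\end{equation*}
defined on the interval given in \eqref{intervals}, the two sources of critical values can be read off from $R$ and its derivative.

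I would first treat the singular contribution. The singular points of $P_{h_2}$ are the endpoints $\pi_1=0$ (of order $q$) and $\pi_2=0$ (of order $p$), as already noted after \eqref{reduction:Equ3}. At either endpoint the constraint forces $\pi_3^{2}+\pi_4^{2}=0$, so every singular point maps to $\Delta h=0$. Inspecting the three cases in \eqref{intervals} then tells us when each endpoint is actually present: for $\sigma=-1$ the endpoint $\pi_1=0$ contributes iff $h_2\le 0$ and $\pi_2=0$ contributes iff $h_2\ge 0$, covering the entire line $\Delta h=0$ as soon as both exponents $p,q$ exceed $1$, but only the half–line $h_2\le 0$ when $p=1$ (in which case $\pi_2=0$ is a smooth point of $P_{h_2}$, since the zero of $R^{2}$ at that endpoint is simple). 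For $\sigma=+1$ the interval is bounded and requires $h_2\ge 0$, and both endpoints again map to $\Delta h=0$, producing the claimed half–line.

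Second, I would compute the smooth critical points of $\pi_3$ on the interior of $P_{h_2}$. For a surface of revolution these are exactly the loci where $R'=0$ and $\pi_4=0$. Solving $\mathrm{d}(R^{2})/\mathrm{d}\pi_{1}=0$ gives the unique candidate
\begin{equation*}
\pi_{1}=\frac{q\,h_{2}}{p(p+q)},\qquad \pi_{2}=\frac{\sigma\,p\,h_{2}}{q(p+q)},
\end{equation*}
which is admissible (i.e.\ has $\pi_{1},\pi_{2}\ge 0$) only when $\sigma=+1$ and $h_{2}\ge 0$. At those points $\pi_{3}=\pm R\neq 0$, and back–substitution produces an explicit pair of branches of the form $\Delta h=\pm c_{p,q}\,h_{2}^{(p+q)/2}$ emanating from the origin, matching the last statement of the lemma. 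For $\sigma=-1$ the two sign conditions on $\pi_{1}$ and $\pi_{2}$ are incompatible, so no smooth interior critical points arise and no additional branches appear.

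The main point to check carefully, and the step I expect to be the most delicate, is the matching at $h_2=0$: the interior critical point just computed degenerates to the origin, and in the $\sigma=-1$ case the two endpoint singularities fuse into the single singularity of order $p+q$ at $\pi_1=0$ already discussed in Section \ref{reduction}. Once this coalescence is verified, the four claims of the lemma follow by combining the endpoint analysis with the interior critical value computation according to the sign of $\sigma$ and of $h_{2}$.
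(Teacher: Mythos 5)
Your proposal is correct and follows essentially the same route as the paper's proof: critical values arise either from the singular points of the reduced phase space $P_{h_2}$ (which all lie on $\Delta h=0$ and whose presence is read off from the admissible $\pi_1$--intervals \eqref{intervals}) or from tangencies of the plane $\pi_3=\Delta h$ with the smooth part of $P_{h_2}$, leading to the same condition $p^2\pi_1=\sigma q^2\pi_2$ and hence to the two extra branches only for $\sigma=+1$. Your added details (the explicit tangency point, the explicit branch shape $\Delta h=\pm c_{p,q}h_2^{(p+q)/2}$, and the remark that $\pi_2=0$ is a smooth point when $p=1$) are consistent with, and slightly more explicit than, the argument in the paper.
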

\begin{proof}
A point in reduced phase space is critical if 1) it is a singular point of the reduced phase space or 2) if it is a point of tangency of the surface $\Delta \mathcal{H} = \Delta h$ (which is simply a horizontal plane) and the reduced phase space. The first condition together with \eqref{intervals} gives that $\pi_1=\pi_3 = \pi_4 = 0$ is critical for $\sigma h_2 \ge 0$ and that $\pi_1 = h_2/p, \pi_3 = \pi_4 = 0$ is critical for $h_2 \ge 0$. The corresponding critical values of the momentum map are $(h_2, 0)$. The second condition requires the gradients of the two integrals to be parallel, thus $\pi_4 = 0$, $\pi_3 \not = 0$,
and the derivative of the right hand side of \eqref{reduction:Equ3}, $\pi_1^q \pi_2^p$, with respect to $\pi_1$ vanishes. This gives $p^2\pi_1 = \sigma q^2 \pi_2$. Since $\pi_1, \pi_2$ are non-negative there is no tangency  for $\sigma = -1$ (appart from the singular point with $h_2 = 0$).
For $\sigma = 1$ there are two additional families of critical values emerging from the origin in a cusp of order $p+q$.
\end{proof}


\section{Dynamics near the Degenerate Equilibrium}

In this section we derive equations for the period $T$ of the reduced flow, the rotation number $W$ of the full system, and the second action $I_{2}$. A prominent feature of the reduced period $T$ is its algebraic (rather than logarithmic) divergence when approaching the degenerate equilibrium point. This is especially easy to see once we introduce weighted polar coordinates $(\rho, \theta)$ in the bifurcation diagram. In these coordinates, $T$ and $W$ separate into $\rho$--~and $\theta$--dependent contributions which considerably simplify the computations.

\subsection{The Reduced Period}
\label{Tpq}

An equation for $T$ is derived by separation of variables from
\begin{equation}
\label{Tpq:Equ1}
	\dot{\pi}_{1} = \left\{ \pi_{1}, \mathcal{H} \right\} = \mu \left\{ \pi_{1}, \pi_{3} \right\} =
	\mu q \pi_{4}.
\end{equation}
Using the equation for the reduced phase space~\eqref{reduction:Equ3} together with $\Delta h = \mu \pi_{3}$, $\pi_4^2$ can be written as a polynomial in $\pi_1$. It follows that the reduced period is defined on the hyperelliptic curve
\begin{equation*}
	\Gamma = \left\{ (\pi_{1}, w) \in \C^{2} \mid w^{2} = Q(\pi_1) \right\} \quad \text{where} \quad
	 Q(z) = \mu^{2} q^{2-p} z^{q} \left( \sigma ( h_{2} - p z ) \right)^{p} - ( q \Delta h )^{2}
\end{equation*}
Separating the variables in equation~\eqref{Tpq:Equ1} and integrating yields
\begin{equation}
	T(h_{2}, \Delta h) = \oint \frac{d \pi_{1}}{w}.
\end{equation}

Our main focus is the $p:-q$ resonance with non--compact fibration. Thus the integral along a closed loop as it stands makes no sense.
We define the reduced period by dividing the dynamics into two parts: Dynamics close to the equilibrium point, and dynamics far away from it.
If the system is compactified by an appropriate higher order term (as in \cite{NekhoroshevSadovskiiZhilinskii2006,Efstathiou2006}) we may assume that the dynamics far away from the equilibrium will eventually return to the neighbourhood of the equilibrium. The time spent on this return loop is a smooth function of initial conditions if we assume that there are no additional critical points in it. Specifically we consider a Poincar\'e section at some small but finite value of $\pi_1$ intersecting stable and unstable invariant
manifolds, similarly to the analysis done for symplectic invariants in \cite{VuNgoc03,DVN05} and the $1:-2$ resonance in \cite{RHCushmanHRDullinHHSS2007}. The contribution of the near-dynamics (from the section of the stable manifold to the section of the unstable manifold) is divergent when approaching the equilibrium point,
while the contribution of the far-dynamics (from the section of the unstable manifold to the section of the stable manifold) remains a smooth and bounded function.

For convenience we modify the truncated period thus defined one more time. Notice that the integral of $d\pi_1/w$
from any finite positive $\pi_1$ to $\infty$ is finite, and smoothly depends on the parameters. Thus changing the truncated period integral to an integral over the non-compact domain only changes it by a smooth function, and the same argument applies. As a result we can treat the closed loop integral of the non-compact system as our leading order period. In particular it correctly describes the leading order divergent terms when approaching the equilibrium point.

An alternative point of view that combines the last two steps (first restriction to the near-dynamics, then the extension to $\infty$)
is to consider the integral in a compactification of the complex plane into a projective space. In this space the integral for $T$ is compact,
and this also explains why it is bounded in the first place. This approach was first used in  \cite{RHCushmanHRDullinHHSS2007}.

In \cite{SMPJJ08} two approaches are followed. The first one generalises the treatment of the
$1:-2$ resonance  \cite{Efstathiou2006,Efstathiou2005} to the $p:-q$ resonance.
There is a privileged compactification which prevents the rotation number from diverging when
approaching the critical values. Using this compactification the period lattice, i.e.~reduced period
and rotation number, is computed. Then {\cred{} the authors} use the Newton polygon to find the leading order
terms of these integrals in the limit approaching the origin of the bifurcation diagram.
The analogue of our polynomial $Q$ appears in their work as the Newton polygon approximation.
What our approach shows is that the resulting hyperelliptic integrals have direct dynamical
meaning as explained above. In the second approach in  \cite{SMPJJ08} the problem is complexified
and monodromy is found as Gauss-Manin monodromy of a loop with complex $(h, \Delta h)$ that avoids critical values.
As {\cred{} the authors} point out the singularity at the origin is not of Morse type, and this is related to the fact the
singularity is degenerate in the sense of the momentum map. As a result the period
diverges algebraically (instead of as usually logarithmically).


%

\begin{lemma}
\label{EPpq:lemma1}
The reduced period $T$ diverges algebraically with exponent $\Abs{\Delta h}^{\frac{2}{p+q}-1}$ upon approaching the degenerate equilibrium point on a curve with non--vanishing derivative at the origin. For $p=1$, the period diverges like
$\Abs{h_{2}}^{-\frac{p+q}{2}+1}$ on the line $\Delta h = 0$ when  $h_{2} \rightarrow 0$.
\end{lemma}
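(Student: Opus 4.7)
My plan rests on a single weighted-homogeneity identity for the period integral. From the representation $T = \oint d\pi_1/w$ with $w^2 = Q(\pi_1;h_2,\Delta h) = \mu^2 q^{2-p}\pi_1^q(\sigma(h_2-p\pi_1))^p - (q\Delta h)^2$, I would observe that the substitution $\pi_1 \mapsto \lambda^2 \pi_1$, $h_2 \mapsto \lambda^2 h_2$, $\Delta h \mapsto \lambda^{p+q}\Delta h$ multiplies $Q$ by $\lambda^{2(p+q)}$: the $\pi_1^q$ factor supplies $\lambda^{2q}$, the $p$-fold factor $\sigma(h_2-p\pi_1)$ supplies $\lambda^{2p}$, and $\Delta h^2$ supplies $\lambda^{2(p+q)}$. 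Dividing $d\pi_1$ by $w=\sqrt{Q}$ then yields
\[ T\bigl(\lambda^2 h_2,\lambda^{p+q}\Delta h\bigr) = \lambda^{2-(p+q)}\,T(h_2,\Delta h), \]
and both divergence rates in the lemma will drop out by an appropriate choice of $\lambda$.

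For the first statement, I would take an arbitrary curve $h_2(t) = \alpha t + O(t^2)$, $\Delta h(t) = \beta t + O(t^2)$ with $\beta \ne 0$, and set $\lambda(t) := (|\Delta h(t)|/|\beta|)^{1/(p+q)}$. The homogeneity identity gives $T(h_2,\Delta h) = \lambda^{2-(p+q)}\,T(h_2/\lambda^2,\mathrm{sgn}(\Delta h)|\beta|)$. Because $p+q \ge 3$, the first argument is $O(t^{1-2/(p+q)}) \to 0$, so by continuity in the parameters the claim reduces to showing that $T(0,\pm|\beta|)$ is a finite nonzero constant. At $h_2 = 0$ the polynomial $Q$ collapses to $\mu^2 p^p q^{2-p}\pi_1^{p+q} - q^2\beta^2$, which has a single positive real root together with complex conjugate pairs; the cycle integral converges at that branch point (square-root branch) and at infinity (the integrand decays like $\pi_1^{-(p+q)/2}$ with $(p+q)/2 > 1$). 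Substituting $\lambda^{2-(p+q)} = (|\Delta h|/|\beta|)^{2/(p+q)-1}$ then produces the exponent claimed in the first part.

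For the second statement I would specialise to $p = 1$ and $\Delta h = 0$, and pick $\lambda = \sqrt{|h_2|}$; the homogeneity identity immediately delivers $T(h_2,0) = |h_2|^{1-(p+q)/2}\,T(-1,0)$, which is exactly the stated rate. The hard part, and the reason for the $p=1$ hypothesis, is to make sense of the prefactor $T(-1,0)$: on the critical fibre $Q(\pi_1;-1,0) = \mu^2 q\,\pi_1^q(\pi_1+1)$ has a $q$-fold zero at $\pi_1 = 0$, so the naive real integral is non-integrable at that endpoint. I would address this using the compactified closed-cycle interpretation of the truncated period set up earlier in this section, separating a smoothly-parameter-dependent ``far'' piece from the singular ``near'' piece; the homogeneity identity forces the near contribution to carry weight exactly $2-(p+q)$, which is what translates into the stated $|h_2|$-rate. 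For $p \ge 2$, both factors $\pi_1^q$ and $(p\pi_1 - h_2)^p$ of $Q$ degenerate simultaneously as $\Delta h \to 0$, the scaling no longer isolates a single universal leading coefficient, and the clean statement fails—this is what restricts the second part to $p = 1$. The main technical obstacle I foresee is precisely this prefactor analysis: verifying that the compactified cycle really captures the leading divergence and is not swamped by subleading smooth contributions from the ``far'' piece as $h_2 \to 0$. Once that is in place, the weighted-homogeneity identity does all the remaining work.
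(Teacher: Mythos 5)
Your weighted--homogeneity identity is exactly the mechanism the paper uses: there it appears as the factorisation $T=\rho^{2-(p+q)}A(\theta)$ in the weighted polar coordinates $\Delta h=\rho^{p+q}\sin\theta$, $h_2=\rho^2\cos\theta$, which is your scaling with $\lambda=\rho$. Your treatment of the first claim is correct and even a little more explicit than the paper's: normalising $\Delta h$ to $\pm\Abs{\beta}$, noting $h_2/\lambda^2\to 0$ because $p+q\ge 3$, and checking that $\oint dz\,\bigl(\mu^2q^{2-p}p^p z^{p+q}-q^2\beta^2\bigr)^{-1/2}$ converges at its simple branch point and at infinity is all that is needed, together with the routine remark that $(0,\pm\Abs{\beta})$ is a regular value (for $\sigma=-1$ all critical values lie on $\Delta h=0$), so $T$ is continuous there.

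The second part goes wrong at the choice of normalisation point. You scale to $T(-1,0)$, find the $q$-fold root of $Q$ sitting at the endpoint $\pi_1=0$ of the integration range, and then propose a near/far decomposition to extract a finite prefactor. But $T(-1,0)$ is genuinely $+\infty$, and not because of any deficiency of the truncation: for $p=1$ the values $\Delta h=0$, $h_2<0$ are critical (see the bifurcation-diagram lemma), the fibre is singular, and the reduced orbit is asymptotic to the singular point $\pi_1=0$ where $\pi_4^2\sim\pi_1^{q}$ vanishes to order $q\ge 2$, so the time of approach is infinite. No regularisation will turn this into the finite coefficient of the asserted asymptotics. The statement concerns the other half-line: for $p=1$ the values $\Delta h=0$, $h_2>0$ are \emph{regular}, the integration interval is $[h_2/p,\infty)=[h_2,\infty)$, and $Q=\mu^2 q\,\pi_1^{q}(\pi_1-h_2)$ has only a simple zero at the endpoint $\pi_1=h_2$; the $q$-fold zero at $\pi_1=0$ lies outside the domain. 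Hence $T(1,0)$ is finite and nonzero, and your identity with $\lambda=\sqrt{h_2}$ gives $T(h_2,0)=h_2^{\,1-(1+q)/2}\,T(1,0)$ at once. This is also the real reason the hypothesis is $p=1$: for $p\ge 2$ the endpoint zero at $\pi_1=h_2/p$ has order $p\ge2$ and the entire line $\Delta h=0$ is critical, so there is no regular ray along which to take the limit. With the normalisation corrected to $h_2=+1$, the ``main technical obstacle'' you foresee disappears.
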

\begin{proof}
Consider the polynomial $Q$ as a polynomial in $\pi_1$, $h_2$, and $\Delta h$.
It is weighted homogeneous, where $h_2$ and $\pi_1$ must have the same weight, so that their weight
is $2$ while that of $\Delta h$ is $p+q$. Therefore we introduce weighted polar coordinates $(\rho, \theta)$
in the image of the momentum map by
\begin{subequations}
\label{Tpq:Equ3}
\begin{align}
	\Delta h &= \rho^{p+q} \sin \theta\\
	   h_{2} &= \rho^{2} \cos \theta
\end{align}
\end{subequations}
Together with $\pi_1 = \rho^2 x$ it follows that
\begin{equation} \label{Teqn}
	T( h_{2}, \Delta h ) = \rho^{-(p+q)+2} \oint \frac{dx}{\tilde{w}}
	=:  \rho^{-(p+q)+2} A(\theta)
\end{equation}
where
\begin{equation*}
	\tilde{w}^{2} = \mu^{2} q^{2-p} x^{q} \left( \sigma( \cos \theta - p x ) \right)^{p}
	- (q \sin \theta)^{2}
\end{equation*}
is independent of $\rho$. Thus $T$ factors into a radial and an angular contribution $A(\theta)$.
The transformation for $(h_2, \Delta h)$ to $(\rho, \theta)$ is $C^{0}$ at the origin but $C^{\infty}$ everywhere else.

The period $T$ diverges with $\rho^{2-p-q}$. When approaching the origin on a line with non-vanishing slope (in the
original variables $(h_2, \Delta h)$ the contribution in $\Delta h \sim \rho^{p+q}$ is of leading order.
Thus $T \sim |\Delta h|^{2/(p+q) - 1}$. When $p=1$ the line segment $\Delta h = 0$ with $h_2>0$ is not critical.
Approaching the origin along this line there is no contribution from $\Delta h$, and thus only $h_2 \sim \rho^2$
is relevant, and the result follows.
\end{proof}

\subsection{The Rotation Number}
\label{Wpq}

Recall from the introduction the symplectic coordinate system with $\{{\cred{}\psi_i}, J_i \} = 1$.
The reduced period gives the period of ${\cred{}\psi_2}$,
while the rotation number gives the advance of ${\cred{}\psi_1}/(2\pi)$ during that period.
The ODE for ${\cred{}\psi_1}$ is
\begin{equation}
\label{Wpq:Equ2}
	\dot{{\cred{}\psi}}_{1} = \left\{ \theta_{1}, H \right\} \,.
\end{equation}
Integration of this ODE for time $T$ gives the rotation number. Note that the angle ${\cred{}\psi_1}$ is
not be globally defined, but all we need is the derivative of ${\cred{}\psi_1}$, see the comments in
\cite{RHCushmanHRDullinHHSS2007}.

The interpretation of the rotation number is similar to the interpretation of the period:
The true (compactified) rotation number will differ from $R$ by a smooth function.
The leading order singular part of the rotation number is contained in $R$.

\begin{proposition}
The rotation number $R$ of the $p : \pm q$ resonance is given by
\begin{equation*}
	R(h_{2}, \Delta h) = \frac{- \sigma}{2 \pi} \oint \left( 1 + \mu \frac{\pi_{3}}{2}
	\left( \frac{q b}{\pi_{1}} - \frac{a p}{\pi_{2}} \right) \right) \frac{d \pi_{1}}{w}.
\end{equation*}
\end{proposition}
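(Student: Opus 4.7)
The plan is to derive a pointwise formula for $\dot{\psi}_1$ along the flow and then integrate it over one reduced period $T$, switching from time to $\pi_1$ as the variable of integration. Since $\psi_1$ is the angle conjugate to $J_1 = H_2$ in the symplectic coordinates $(J_i, \psi_i)$ constructed in Section~\ref{example}, Hamilton's equations applied to the reduced Hamiltonian $H = J_1 + \mu \pi_3$ immediately give $\dot{\psi}_1 = 1 + \mu\,\partial \pi_3/\partial J_1$.

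The first step is to compute $\partial \pi_3/\partial J_1$ via the chain rule through the original polar variables $(A_i, \phi_i)$. Reading off $M$ from $\psi_2 = -\sigma q\phi_1 + p\phi_2$ and $\psi_1 = b\phi_1 - a\phi_2$ with $\det M = bp - \sigma a q = 1$, one has $A = M^t J$, so $\partial A_1/\partial J_1 = b$ and $\partial A_2/\partial J_1 = -a$. In action--angle form $\Delta H$ is proportional to $\sqrt{A_1^q A_2^p}\cos\psi_2$, so identifying $A_i = \pi_i$ and $\Delta H = \pi_3$ yields $\partial \pi_3/\partial A_1 = q\pi_3/(2\pi_1)$ and $\partial \pi_3/\partial A_2 = p\pi_3/(2\pi_2)$. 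The chain rule then delivers
\[
  \dot{\psi}_1 \;=\; 1 + \mu\,\frac{\pi_3}{2}\left(\frac{bq}{\pi_1} - \frac{ap}{\pi_2}\right).
\]

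The second step is to convert the time integral over one period into a hyperelliptic contour integral. From $\dot{\pi}_1 = \{\pi_1, \mathcal{H}\} = \mu q \pi_4$ and the earlier identification $w = \mu q \pi_4$ on the curve $\Gamma$, one has $dt = d\pi_1/w$, exactly as used in the period formula $T = \oint d\pi_1/w$. Therefore
\[
  2\pi R \;=\; \int_0^T \dot{\psi}_1\, dt \;=\; \oint\left[1 + \mu\,\frac{\pi_3}{2}\left(\frac{bq}{\pi_1} - \frac{ap}{\pi_2}\right)\right]\frac{d\pi_1}{w},
\]
which reproduces the proposition up to the overall $-\sigma$. The principal obstacle will be pinning down this sign: the orientation of the closed contour on $\Gamma$ (implicit in the period formula) must be matched to the direction of the reduced flow, which is controlled by the choice of branch of $w = \pm\sqrt{Q}$ and hence depends on $\sigma$. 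Once the orientation is fixed consistently across $\sigma = \pm 1$, the factor $-\sigma$ will emerge automatically. A milder subtlety, already flagged in the subsection preamble, is that $\psi_1$ is not globally defined; only its time derivative enters, and this is smooth wherever the flow is, so the integral of $\dot{\psi}_1$ over one period is unambiguous.
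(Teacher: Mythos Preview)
Your approach is essentially the paper's: compute $\dot\psi_1 = \{\psi_1,H\} = \partial H/\partial J_1$ via the chain rule through $(A_i,\phi_i)$, using $\pi_i=A_i$, $\pi_3=A_1^{q/2}A_2^{p/2}\cos\psi_2$ and $\partial A_1/\partial J_1=b$, $\partial A_2/\partial J_1=-a$, then switch from $t$ to $\pi_1$ with $dt=d\pi_1/w$. This matches the paper's computation of $\{\psi_1,\pi_i\}$ line by line.

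The one point where you diverge is the justification of the overall factor $-\sigma$. You attribute it to the orientation of the contour and the branch of $w$, claiming this depends on $\sigma$. But the orientation convention is already pinned down by the requirement $T=\oint d\pi_1/w>0$, and that choice does not introduce a $\sigma$. The paper instead fixes the sign \emph{a posteriori} by demanding consistency with the frequency ratio convention $R=\omega_1/\omega_2$, $\omega_i=\partial H/\partial I_i$, in the appropriate limit. So your orientation argument does not actually produce the $-\sigma$; you should replace it with the paper's convention-matching step (which, admittedly, is itself stated rather than derived).
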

\begin{proof}
The angle ${\cred{}\psi_{1}}$ satisfies the following Poisson brackets:
\begin{align*}
	\left\{ {\cred{}\psi_{1}}, \pi_{1} \right\} &= b\\
	\left\{ {\cred{}\psi_{1}}, \pi_{2} \right\} &= -a\\
	\left\{ {\cred{}\psi_{1}}, \pi_{3} \right\} &=
	\frac{\pi_{3}}{2} \left( \frac{q b}{\pi_{1}} - \frac{a p}{\pi_{2}} \right).
\end{align*}
These brackets follow from expressing $\pi_i$ in terms of the canonical variables $(J_i, {\cred{}\psi_i})$, noting $\pi_1 = A_1$, $\pi_2 = A_2$,
$\pi_3 = A_1^{q/2} A_2^{p/2} \cos {\cred{}\psi_2}$, and then computing the canonical brackets, using
\begin{equation*}
	\pfrac{\pi_{1}}{J_{1}} = b, \quad \pfrac{\pi_{2}}{J_{1}} = -a \,.
\end{equation*}

Thus, the differential equation for ${\cred{}\psi_{1}}$~\eqref{Wpq:Equ2} becomes
\begin{equation*}
	\dot{{\cred{}\psi}}_{1} = 1 + \mu \frac{\pi_{3}}{2} \left( \frac{q b}{\pi_{1}} - \frac{a p}{\pi_{2}} \right).
\end{equation*}
Changing the integration variable from $t$ to $\pi_{1}$ using~\eqref{Tpq:Equ1} gives ${\cred{}\psi_1}$ as
an Abelian integral. Comparison of the limiting behaviour of the rotation number and insisting on
the relations $R = \omega_1 / \omega_2$ and $\partial H/\partial I_i = \omega_i$ gives the overall
sign $-\sigma$ in $R$.
\end{proof}

We refer to the three integrals $R$ is composed of as
\begin{align*}
	R(h_{2}, \Delta h) & =  -\sigma \left( \frac{1}{2 \pi} T(h_{2}, \Delta h) + W(h_2, \Delta h) \right), \\
            W(h_2, \Delta h)  & =  W_{1}(h_{2}, \Delta h)	+ W_{2}(h_{2}, \Delta h)
\end{align*}
where
\begin{align*}
	W_{1}(h_{2}, \Delta h) &= \phantom{-} \frac{q b}{4\pi} \Delta h \oint \frac{d \pi_{1}}{\pi_{1} w},\\
	W_{2}(h_{2}, \Delta h) &= -\frac{a p}{4\pi} \Delta h \oint \frac{d \pi_{1}}{\pi_{2} w}.
\end{align*}
Expressing these functions in the weighted polar coordinates $(\rho, \theta)$ gives
\begin{align}
\label{Wpq:Equ1}
	B_{1}(\theta) &= \frac{q b}{4\pi} \sin \theta \oint \frac{d x}{x \tilde{w}}\\
	B_{2}(\theta) &= -\frac{a p q}{4\pi} \sin \theta \oint \frac{d x}{(p x - \cos \theta) \tilde{w}}.
\end{align}
Note that $B_1$ and $B_2$ do not depend on $\rho$, which is the main virtue of the weighted polar coordinates.
Nevertheless, the original functions $W_1$ and $W_2$ are not continuous at $(h_2, \Delta h) = (0,0)$.
The reason is that they take different values when approaching the origin along different lines $\theta = const$.

Figure~\ref{Wpq:Fig1} shows a plot of $B_{1}(\theta)$ on the left and $B_{2}(\theta)$ on the right on a loop $\Gamma(\theta)$ around the origin of the bifurcation diagram fig.~\ref{EMpq:Fig1} top right with constant $\rho$ for the $2:-3$ resonance.

\begin{figure}[tbhp]
\begin{center}
\includegraphics[width=0.45\textwidth]{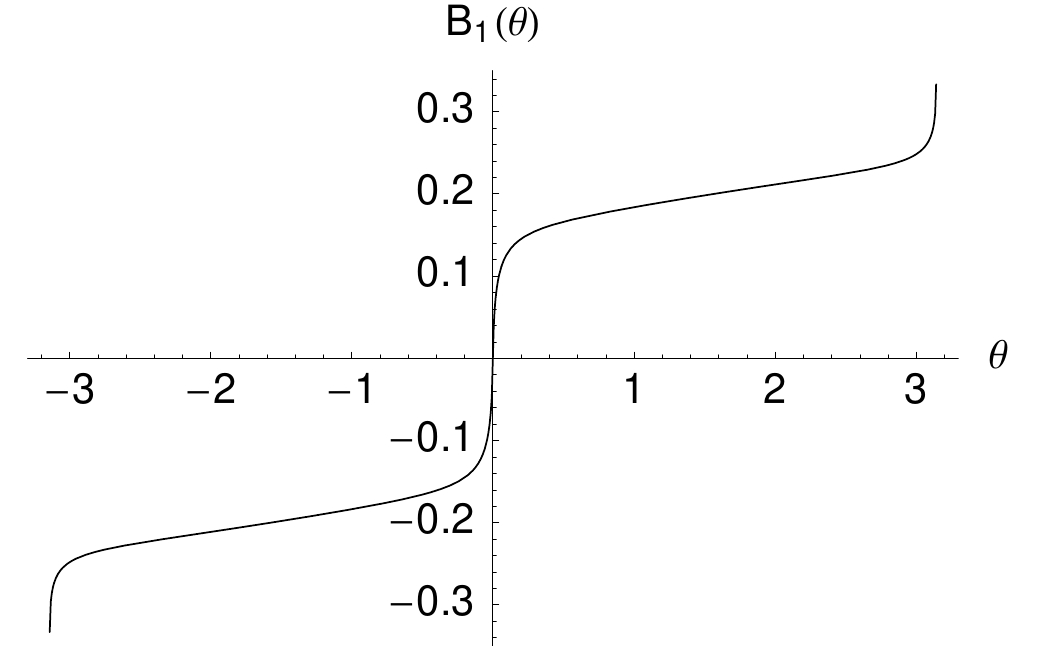}
\includegraphics[width=0.45\textwidth]{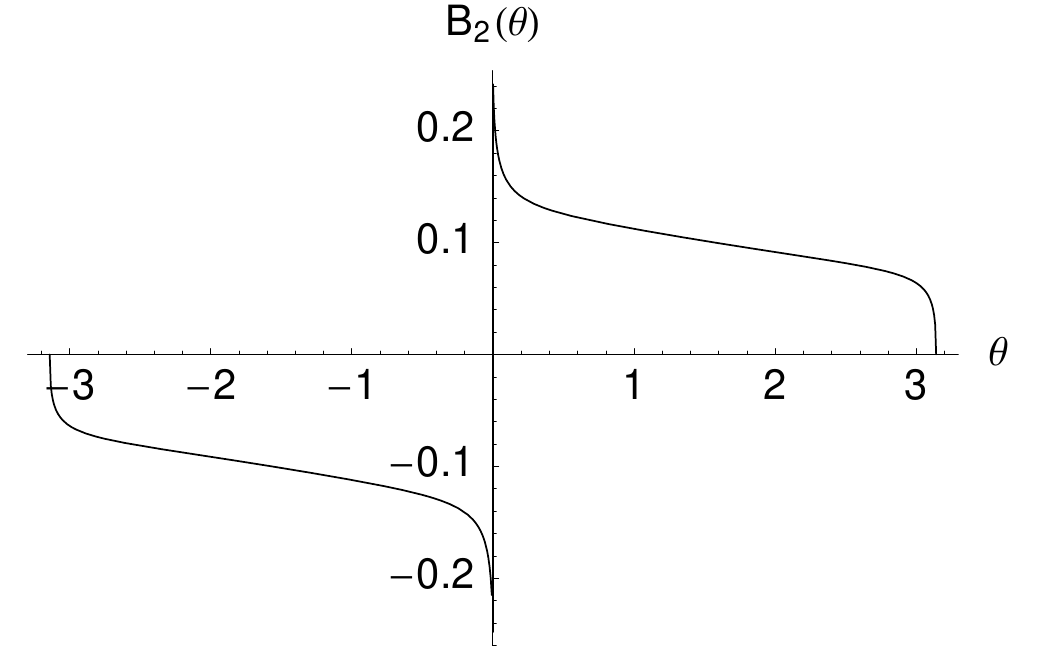}
\end{center}
\caption{\label{Wpq:Fig1}Plot of $B_{1}(\theta)$ (left) and $B_{2}(\theta)$ (right) for the $2:-3$ resonance. $a=-1$, $b=2$.
The corresponding bifurcation diagram is shown in fig.~\ref{EMpq:Fig1} top right.
$B_{1}$ is discontinuous on the branch of critical values with $\Delta h = 0$ and $h_{2} < 0$, i.e.~$\theta = \pm \pi$, with total jump equal to ${\cred{} -b/q}$. $B_{2}$ is discontinuous on the branch of critical values with $\Delta h = 0$ and $h_{2} > 0$, i.e.~$\theta = 0$, with total jump equal to ${\cred{}-a/p}$, see Lemma~\ref{Blem}. }
\end{figure}

The functions $B_{1}$ and $B_{2}$ are periodic, but discontinuous. The discontinuity occurs where the critical points are crossed. The functions can be made continuous by an appropriate shift, but then they will not be periodic any more. They cannot be made periodic and continuous at the same time. It turns out that this behaviour is the reason fractional monodromy occurs in the $p: -q$ resonance, see section~\ref{FMpq}.
From the derivation of $B_1$ and $B_2$ it is clear that they are not really independent functions,
but can be obtained from each other by exchanging $p$ and $q$, and shifting $\theta$ by $\pi$.

Note that if we would have considered $\Delta H$ as our Hamiltonian instead of $H$
the rotation number would not have the diverging contribution from $T$; it would
be $W = W_1 + W_2$ instead of $R = T/2\pi + W$.
For the vanishing twist described later on this contribution is crucial.

\subsection{The Non--Trivial Action}
\label{Apq}

The flow of $H_2$ is periodic with period $2\pi$, and therefore $I_1 = H_2$ is one action of the system.
The other action $I_2$ is a non-trivial function of $\Delta h$ (or $h$) and $I_1$.
From the local canonical coordinates mentioned in the introduction we know that $J_2$ and ${\cred{}\psi_2}$
are conjugate variables. Translating this to the reduced system gives
\begin{proposition}
\label{Apq:prop1}
The non--trivial action $I_{2}$ of the $p : \pm q$ resonance is
\begin{equation}
\label{Apq:Equ1}
	I_{2}( I_{1}, \Delta h) = -\sigma  \left(
	 \frac{p+q}{4\pi} \Delta h \mspace{2mu} T(I_{1}, \Delta h)
	 - I_1 W ( I_1, \Delta h)
	\right).
\end{equation}
\end{proposition}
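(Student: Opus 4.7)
The plan is to use that $(\psi_2, J_2)$ are canonically conjugate in the reduced system, so that
\[
2\pi I_2 \;=\; \oint J_2\, d\psi_2 \;=\; \int_0^T J_2\, \frac{\partial H}{\partial J_2}\, dt,
\]
the integral running over one period of the reduced flow. Non-compactness of the fiber is handled exactly as for $T$ and $R$ in the preceding subsections: the closed-loop integral captures the leading-order singular contribution of $I_2$ near the degenerate equilibrium, while the smooth far-away contribution is absorbed into the truncation/compactification procedure. The task is therefore to evaluate the integrand in closed form.

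The key algebraic input is an Euler identity. Since $\Delta H$ is, up to a constant, $A_1^{q/2} A_2^{p/2}\cos\psi_2$, we have $A_1\,\partial_{A_1}\Delta H + A_2\,\partial_{A_2}\Delta H = \frac{p+q}{2}\Delta H$. From the linear symplectic change of coordinates recalled in Section~\ref{example}, $A_1 = bJ_1 - \sigma q J_2$ and $A_2 = -aJ_1 + pJ_2$, so $-\sigma q J_2 = A_1 - bJ_1$ and $pJ_2 = A_2 + aJ_1$. Substituting these into
\[
J_2\, \frac{\partial \Delta H}{\partial J_2} \;=\; -\sigma q J_2\, \frac{\partial \Delta H}{\partial A_1} \;+\; pJ_2\, \frac{\partial \Delta H}{\partial A_2}
\]
and applying the Euler identity yields
\[
J_2\, \frac{\partial \Delta H}{\partial J_2} \;=\; \frac{p+q}{2}\,\Delta H \;-\; J_1\, \frac{\partial \Delta H}{\partial J_1}.
\]
Combining with $H = J_1 + \mu\Delta H$ then gives $J_2\,\partial_{J_2}H = \frac{p+q}{2}\mu\Delta H + J_1 - J_1\,\partial_{J_1}H$.

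Finally, integrate over $[0,T]$: both $\mu\Delta H = \Delta h$ and $J_1 = I_1$ are constant along the reduced orbit, while $\int_0^T \partial_{J_1}H\, dt = \int_0^T \dot\psi_1\, dt = 2\pi R$ by the very definition of the rotation number. This produces $I_2 = \frac{p+q}{4\pi}\Delta h\, T + I_1 T/(2\pi) - I_1 R$, and substituting the decomposition $R = -\sigma(T/(2\pi) + W)$ from the preceding subsection cancels the auxiliary $I_1 T/(2\pi)$ against the $T$-part of $-I_1 R$ and yields the claimed formula. The main obstacle is not the computation itself but the rigorous identification of $I_2$ with the closed-loop integral on the non-compact fiber; this is handled by the same complex/projective compactification argument used in Section~\ref{Tpq} for $T$, together with the remark that $\psi_2$ advances by $2\pi$ per period of the reduced flow in the rotation regime dominating near the degenerate equilibrium.
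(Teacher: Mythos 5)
Your derivation is correct in substance but follows a genuinely different route from the paper's. The paper starts from $I_2=\frac{\sigma}{2\pi}\oint \arccos\bigl(\pi_3/\sqrt{\pi_1^q\pi_2^p}\bigr)\,(a\,d\pi_1+b\,d\pi_2)$, integrates by parts explicitly, and then recognises the resulting Abelian integrals as the combination $\frac{p+q}{2}\Delta h\,(bp-\sigma aq)\oint\frac{d\pi_1}{w}$ plus the $W_1,W_2$ integrals, with $\det M=1$ closing the argument. You instead work with $\oint J_2\,d\psi_2=\int_0^T J_2\,\partial_{J_2}H\,dt$ and use the weighted homogeneity of $\Delta H$ (Euler's identity $J_2\partial_{J_2}\Delta H=\frac{p+q}{2}\Delta H-J_1\partial_{J_1}\Delta H$) together with the constancy of $\Delta h$ and $J_1$ along the reduced orbit and the dynamical definitions of $T$ and $R$. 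This is more conceptual: it explains the coefficient $\frac{p+q}{4\pi}$ as the weighted degree of the resonant term and derives the relation without evaluating or even writing down a single hyperelliptic integral; the paper's computation, by contrast, produces the explicit integrals (needed later for Lemma~\ref{Blem}) and verifies directly that the answer is independent of the choice of $a,b$. Two caveats. First, your final substitution is tuned to $\sigma=-1$: with the paper's convention that $R$ carries an ad hoc overall factor $-\sigma$ (inserted so that $R=\omega_1/\omega_2$), one has $\int_0^T\dot\psi_1\,dt=-2\pi\sigma R=T+2\pi W$, not $2\pi R$; using the former gives $I_2=\frac{p+q}{4\pi}\Delta h\,T-I_1W$ uniformly in $\sigma$, matching \eqref{Apq:Equ1} up to the overall orientation factor $-\sigma$, whereas your identification leaves an uncancelled $I_1T/\pi$ when $\sigma=+1$. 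Second, the equality of $\oint J_2\,d\psi_2$ with $-\oint\psi_2\,dJ_2$ (the form the paper actually computes) requires control of the boundary term $[J_2\psi_2]$ on the non-compact fibre, which is the same issue the paper defers to \cite{Schmidt08} ("the boundary terms from the partial integration cancel"); your appeal to the rotation regime and the projective compactification is at the same level of rigour as the published argument, but it is an assumption, not something you have verified.
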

\begin{proof}
{\cred{}Using} the reduced Poisson bracket we {\cred{}find}
\[
     \{ \cos^{-1} \frac{\pi_3}{\sqrt{ \pi_1^q \pi_2^p }}, a \pi_1 + b \pi_2 \} =-\sigma \,.
\]
The action $I_2$ is therefore given by
\begin{equation*}
	I_{2}(h_{2}, \Delta h) = \frac{\sigma}{2 \pi} \oint \arccos
	\frac{\pi_{3}}{\sqrt{\pi_{1}^{q} \pi_{2}^{p}}} \mspace{2mu}
	( a \mspace{1mu} d \mspace{1mu} \pi_{1} + b \mspace{1mu} d \mspace{1mu} \pi_{2} ).
\end{equation*}
Integration by parts gives 
\begin{equation*}
	I_{2} = \frac{- \sigma}{2 \pi} \oint \frac{\pi_{3}}{\sqrt{\pi_{1}^{q} \pi_{2}^{p} - \pi_{3}^{2}}}
	\left( \left( \frac{a q}{2} + \frac{a p^{2}}{2q} \frac{\pi_{1}}{\pi_{2}} \right) d \pi_{1}
	+ \left( \frac{b p}{2} + \frac{b q^{2}}{2p} \frac{\pi_{2}}{\pi_{1}} \right) d \pi_{2} \right).
\end{equation*}
Now $\pi_1$ and $\pi_2$ are  related by the Casimir $I_1 = h_2 = p \pi_1 + \sigma q \pi_2$.
Thus $d \pi_{2} =  \sigma \frac{p}{q} d \pi_{1}$
and with $\Delta h = \mu \pi_{3}$ this equation becomes
\begin{equation*}
	I_{2} = \frac{- \sigma}{2 \pi} \left( \frac{\Delta h}{2} (p+q) \left( b p - \sigma aq  \right) \oint \frac{d \pi_{1}}{w}
	+ \frac{\Delta h \mspace{2mu} h_{2} \mspace{2mu} a \mspace{2mu} p}{2} \oint
	\frac{d \pi_{1}}{\pi_{2} w}
	- \frac{\Delta h \mspace{2mu} h_{2} \mspace{2mu} b \mspace{2mu} q}{2} \oint
	\frac{d \pi_{1}}{\pi_{1} w} \right).
\end{equation*}
Using $\det M = b p - \sigma a q = 1$, $h_2 = I_1$, and recalling that $W = W_1 + W_2$ gives the result.
Notice that the final answer is independent of the choice of integers $a$ and $b$ in $M$, as long as
$\det M = 1$.
\end{proof}

This expression for $I_{2}$ reduces to the one found for the $1:-2$ resonance \cite{RHCushmanHRDullinHHSS2007}, where $p+q=3$, $a=0$ and $b=1$. In the next section on fractional monodromy we shall prove that the terms $W_{1}$ and $W_{2}$ are discontinuous (see fig.~\ref{Wpq:Fig1}) at $\theta = \pm \pi$ and $\theta=\pm 0$ respectively. They can be made non-periodic and continuous, thus causing the action $I_{2}$ to be globally multivalued.


Even though $T$ diverges algebraically like $\Delta h^{\frac{2}{p+q}-1}$ (Lemma~\ref{EPpq:lemma1}), the action $I_{2}$ goes to zero like $\rho^2 \sim \Delta h^{\frac{2}{p+q}}$ when approaching the equilibrium point.
The action $I_2$ does have the interpretation of a phase space volume. Even though the system is non-compact the action
does not diverge when approaching the equilibrium point. The geometric reason is that $I_2$ measures the volume
relative to the (unbounded) separatrix, which is finite. The boundary terms from the partial integration cancel, see
\cite{Schmidt08} for the details.

Another interpretation of this formula is obtained by solving it for the rotation number.
This gives a decomposition of  the rotation number into a dynamical phase proportional to $T$ and a geometric phase proportional to the action $I_2$, compare e.g.~\cite{Montgomery91}.


\section{Fractional Monodromy}
\label{FMpq}

In this section we establish the fact that the ${\cred{}p: - q}$ resonance has fractional monodromy. We explicitly calculate the monodromy matrix $M$ that gives the transformation of the actions $I_{1}$ and $I_{2}$ after one full anticlockwise cycle around a loop $\Gamma$ enclosing the degenerate equilibrium point at the origin of the bifurcation diagram fig.~\ref{EMpq:Fig1} top row. We then describe the singular fibres corresponding to the critical values the loop $\Gamma$ crosses.

\begin{lemma} \label{Blem}
The function $B_{1}(\theta)$ satisfies
\begin{equation}
	\lim_{\theta \rightarrow \pm \pi^{\mp}} B_{1}(\theta) = \pm  \frac{b}{2q}.
\end{equation}
The function $B_{2}(\theta)$ satisfies
\begin{equation}
	\lim_{\theta \rightarrow 0^{\pm}} B_{2}(\theta) = \mp  \frac{a}{2p}.
\end{equation}
\end{lemma}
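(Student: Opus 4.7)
I would prove the claim for $B_1$; the claim for $B_2$ then follows by the $p \leftrightarrow q$, $\theta \mapsto \theta + \pi$ (and $a \leftrightarrow b$) symmetry noted in the text above. Since the prefactor $\sin\theta$ vanishes in the limit, the problem reduces to extracting the leading $1/|\sin\theta|$ divergence of $\oint dx/(x \tilde{w})$ and computing its coefficient.

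The strategy is a scaling analysis of the coalescence of a turning point of $\tilde{w}^{2}$ with the pole of $1/x$ at $x = 0$ as $\theta \to \pm \pi$. With $\sigma = -1$ one has $\tilde{w}^{2} = \mu^{2} q^{2-p} x^{q} (p x - \cos\theta)^{p} - (q \sin\theta)^{2}$, and solving $\tilde{w}^{2}(x_{\ast}) = 0$ for the root closest to $0$ gives, to leading order, $x_{\ast}^{q} \sim q^{p} \sin^{2}\theta / (\mu^{2} (-\cos\theta)^{p})$, hence $x_{\ast} \sim |\sin\theta|^{2/q} \to 0$. Fix a small constant $M > 0$ and split the half-cycle integral as $\int_{x_{\ast}}^{M} + \int_{M}^{\infty}$. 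On $[M, \infty)$ the integrand has a smooth bounded limit, so multiplying by $\sin\theta$ kills this piece.

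On $[x_{\ast}, M]$ introduce the rescaled variable $y = x/x_{\ast}$, so that $dx/x = dy/y$. Freezing $(p x - \cos\theta)^{p}$ to its value $(-\cos\theta)^{p}$ at $x = 0$ up to an $O(M)$ correction yields, uniformly on $y \in [1, M/x_{\ast}]$,
\[
    \tilde{w}^{2} = q^{2} \sin^{2}\theta \, (y^{q} - 1) + O(x_{\ast}).
\]
Dominated convergence then gives
\[
    \sin\theta \int_{x_{\ast}}^{M} \frac{dx}{x \tilde{w}} \;\longrightarrow\; \frac{\operatorname{sgn}(\sin\theta)}{q} \int_{1}^{\infty} \frac{dy}{y \sqrt{y^{q} - 1}},
\]
and the substitution $u = y^{-q}$ converts the last integral into $\frac{1}{q} \int_{0}^{1} du / \sqrt{u(1-u)} = \pi/q$. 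Doubling for the two sheets of the hyperelliptic cycle and multiplying by the prefactor $qb/(4\pi)$ yields $\pm b/(2q)$, with the sign matching $\operatorname{sgn}(\sin\theta)$ as required.

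The one delicate step is the uniform control of the $O(x_{\ast})$ remainder on the expanding interval $[1, M/x_{\ast}]$ so that the limit may be passed under the integral; this should reduce to a routine estimate, since the leading term $y^{q} - 1$ dominates the correction away from $y = 1$ and the integrable square-root singularity at $y = 1$ is present in both. The analogous computation for $B_{2}$ near $\theta = 0$ rescales about the other coalescing point $x = \cos\theta/p$ (the conical singularity of the reduced phase space), with $p$ now playing the role of $q$; the same beta integral appears and delivers the constant $\mp a/(2p)$.
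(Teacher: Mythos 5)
Your argument is correct in outline and rests on the same key idea as the proof the paper relies on (deferred there to the reference [SMPJJ08] and carried out in the authors' own draft): as $\theta\to\pm\pi$ the $q$-fold root of $\tilde w^2$ at $x=0$ coalesces with the pole of the integrand, and the blow-up $x=|\sin\theta|^{2/q}y$ resolves the coalescence and produces an explicit limiting integral; your beta-integral $\tfrac1q\int_0^1 du/\sqrt{u(1-u)}=\pi/q$ and the paper's $\int_{-\infty}^{\infty}dx/\bigl((u_++x^2)\sqrt{v(2u_++x^2)}\bigr)$ are the same quantity reached by different substitutions, and the constants, the factor of two for the two sheets, and the signs $\operatorname{sgn}(\sin\theta)$ all check out against the stated $\pm b/(2q)$. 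The one place where your write-up differs in substance is the justification of the limit interchange: the paper desingularizes the turning point explicitly (via $z^q=u^2$, $u=u_++t^{2q}$, $x=t^q$) so that the integrand is manifestly continuous in $\delta=\sin\theta$ and one may simply set $\delta=0$, whereas you cut off at a small $M$, freeze the factor $(px-\cos\theta)^p$, and appeal to dominated convergence. The gap you flag is real but routine: writing $\tilde w^2=q^2\sin^2\theta\,(f(y)-f(1))$ with $f(y)=y^q(1+px_*y/\kappa)^p$, one has $y^q-1\le f(y)-f(1)\le(1+CM)(y^q-1)$ uniformly on $[1,M/x_*]$ by comparing $f'$, which sandwiches the rescaled integral and lets you take $\theta\to\pm\pi$ first and then $M\to0$ (your stated remainder ``$O(x_*)$'' should really be a relative error $O(M)$, which is what this estimate delivers). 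Your reduction of the $B_2$ statement to the $B_1$ statement by the $p\leftrightarrow q$ symmetry about the other coalescing point $x=\cos\theta/p$ is exactly the relation the paper itself records, so that part is fine as well.
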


This is the main technical result, but instead of duplicating the proof here, we refer to \cite{SMPJJ08}.
The main addition to their work at this point is the interpretation of these integrals. In  \cite{SMPJJ08} they appeared
as the leading order Newton-polygon approximation of the compactified rotation number integrals.
In our approach they appear directly as the integrals of the rotation number of the non-compact system
interpreted as explained before.


Using this technical result we are now giving another proof of fractional monodromy in the $p:-q$ resonance
which is based on the explicit expression of the action obtained earlier. Recall that
$I_{1}$ is the globally smooth action $H_{2}$ of the system as introduced in the introduction.
\begin{theorem}
\label{FMpq:thm1}
The $p:-q$ resonance has fractional monodromy near the degenerate elliptic equilibrium point. The actions change according to
\begin{equation*}
\begin{pmatrix}
	I_{1}'\\[5pt]
	I_{2}'
\end{pmatrix} =
\begin{pmatrix}
	            1 & 0\\[5pt]
	{\cred{}-}\frac{1}{p q} & 1
\end{pmatrix}
\begin{pmatrix}
	I_{1}\\[5pt]
	I_{2}
\end{pmatrix}
\end{equation*}
after one full anticlockwise cycle on a loop $\Gamma$ around the degenerate equilibrium point in the bifurcation diagram.
\end{theorem}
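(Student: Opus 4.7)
The plan is to reduce the monodromy of the actions to the monodromy of the single function $W$ around $\Gamma$, using the closed-form expression for $I_2$ obtained in Proposition~\ref{Apq:prop1}. Since $I_1 = H_2$ is globally smooth on phase space, it is single-valued along $\Gamma$ and so $I_1' = I_1$ automatically; the entire statement therefore concerns the change of $I_2$ under one anticlockwise traversal. In the $\sigma = -1$ case the action formula reads
\begin{equation*}
  I_2 = \frac{p+q}{4\pi}\Delta h\, T - I_1 W.
\end{equation*}
Parameterising $\Gamma$ by the weighted polar angle $\theta\in[-\pi,\pi]$ at fixed $\rho$, we have $\Delta h = \rho^{p+q}\sin\theta$ and $T = \rho^{2-p-q}A(\theta)$, both of which are $2\pi$-periodic and continuous along $\Gamma$. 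Their product therefore returns to itself over one loop, so the full monodromy of $I_2$ reduces to $I_2' - I_2 = -I_1\,(W' - W)$.

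Next I would compute $W' - W$ directly from the splitting $W = B_1(\theta) + B_2(\theta)$ using Lemma~\ref{Blem}. That lemma says $B_1$ has a single discontinuity at $\theta = \pm\pi$ of total jump $-b/q$, and $B_2$ has a single discontinuity at $\theta = 0$ of total jump $-a/p$. Choosing a continuous lift of $W$ along the anticlockwise loop therefore amounts to adding $+b/q$ after crossing $\theta = \pm\pi$ and $+a/p$ after crossing $\theta = 0$, giving
\begin{equation*}
  W' - W = \frac{b}{q} + \frac{a}{p} = \frac{bp + aq}{pq}.
\end{equation*}
The auxiliary integers $a,b$ were constrained from the start by $\det M = bp - \sigma aq = 1$, which for $\sigma = -1$ reads $bp + aq = 1$. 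Hence $W' - W = 1/(pq)$, independent of the specific choice of $a,b$, and
\begin{equation*}
  I_2' - I_2 = -\frac{I_1}{pq},
\end{equation*}
which is exactly the $(2,1)$ entry of the claimed matrix.

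The only genuinely non-trivial input in this argument is Lemma~\ref{Blem} itself, whose proof requires delicate residue/limit computations on the hyperelliptic curve; we are following the paper's convention of taking that result as established via~\cite{SMPJJ08}. Once the discontinuity data of $B_1,B_2$ are in hand, the remaining steps are arithmetic together with the observation that $\Gamma$ is necessarily forced to cross the two critical rays $\Delta h = 0$: this is precisely the mechanism that makes the change of $I_2$ a strict fraction $1/(pq)$ of $I_1$ rather than an integer multiple, and is the hallmark of fractional monodromy.
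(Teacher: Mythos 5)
Your proposal is correct and follows essentially the same route as the paper: it uses the action formula of Proposition~\ref{Apq:prop1}, reduces the monodromy of $I_2$ to the continuation of $W=W_1+W_2$ across the two critical rays, and evaluates the jumps via Lemma~\ref{Blem} together with $bp+aq=1$. Your explicit remark that the $\Delta h\,T$ term is single-valued along $\Gamma$ (it vanishes on the critical rays, where $\Delta h=0$ compensates any divergence of $T$) is only implicit in the paper's proof, but the argument is otherwise identical.
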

\begin{proof}
Assume the loop $\Gamma$ with fixed $\rho = \rho_{0}$ is traversed in the mathematical positive sense. When crossing the line $\Delta h = 0$, $h_{2} < 0$, $\frac{B_{1}(\pi^{-})}{2 \pi} = \frac{b}{2 q}$, and
$\frac{B_{1}(- \pi^{+})}{2 \pi} = - \frac{b}{2 q}$. Hence, the effective jump of $W_{1}$ becomes ${\cred{}-\frac{b}{q}}$. A similar argument holds for $W_{2}$ at $\Delta h = 0$, $h_{2} > 0$. Although the loop is traversed such that $\theta$ crosses the line $\Delta h = 0$, $h_{2} > 0$ from below, $W_{2}$ has the opposite sign of $W_{1}$. Thus, the effective jump of $W_{2}$ is ${\cred{}-\frac{a}{p}}$. Using the form of the action variable from proposition~\ref{Apq:prop1}, it follows that the action $I_{2}$ changes like
\begin{equation*}
	{\cred{}I_{2} \rightarrow I_{2} - I_{1} \left( \frac{a}{p} + \frac{b}{q} \right)}
\end{equation*}
upon completing a full cycle. Due to
\begin{equation*}
	\frac{a}{p} + \frac{b}{q} = \frac{aq + b p}{p q} = \frac{1}{p q},
\end{equation*}
the monodromy is (independently of the integers $a$ and $b$) given by
\begin{equation*}
	{\cred{}I_{2} \rightarrow I_{2} - \frac{1}{p q} I_{1}}.
\end{equation*}
\end{proof}

The singular fibre of the critical values with $h_2 < 0$ depends on the value of $q$.
Consider the Poincar\'e section $p_2 = 0$ and $\dot p_2 < 0$. The critical level has
$\Delta h = \pi_3 = 0$. Thus we need to study the level set determined by the three equations
$p |z_1|^2 + \sigma q |z_2|^2 = h_2 < 0$, $\Re( z_1^q z_2^p) = 0$, and $p_2 = 0$ near the
critical point $z_1 = 0$. From the first and last equation we obtain
$q_2^2 = (-h_2 + p|z_1|^2)/(-\sigma q)$. For small $|z_1|$ therefore $q_2$, and hence $z_2$, is approximately constant.
The remaining equation thus becomes $\Re( z_1^q) = 0$.
Writing $z_1 = r \exp \imag \varphi$ gives $r_1^q \cos q \varphi = 0$, so that
the level sets of the intersection of the critical fibre with the Poincar\'e section are
given by the $2q$ rays with $q \varphi = \pi ( n + 1/2)$, $n = 0, 1, \dots, 2q - 1$.
Together they form $q$ straight lines which are the stable and unstable manifolds
in alternation, all passing through the origin.

If the level set is compactified, e.g.\ by considering $\Re(z_1^q) + |z_1|^{q+1}$,
it becomes a flower with $q$ petals.
The action of the flow of $H_2$ on the level set is as follows. Start with a point in the section $p_2 = 0$, hence $z_2 = q_2$.
It returns to the section when the imaginary part of its image vanishes again with positive derivative,
hence for $\Im( q_2 \exp( \sigma q i t)) = q_2 \sin (\sigma q t) = 0$, with $\sigma q \, q_2 \cos(\sigma q t)  < 0$.
The smallest positive $t$ that solves this is $t = 2\pi/q$. The action of the flow after the return time therefore
is
\[
     \Phi^{H_2}( 2\pi/q, z_1, z_2) = ( z_1 \exp( 2\imag \pi p / q), z_2 ) \,.
\]
This is simply a rotation by $2\pi p /q$ in the $z_1$ plane.
As a result the petals of the flower are mapped into each other, and $p$ is the number of petals that
the rotation advances by. Since $p$ and $q$ are coprime, all petals are visited before the orbit
returns back to the initial one.
The action is the same for either sign of $\sigma$, the difference is that for $\sigma = -1$ instead of petals we have sectors delineated by
stable and unstable manifolds.

Considering the level set as a whole, and not only in the Poincar\'e section, gives a curled torus whose
transversal cross section is the petal, and whose tubes rotate by $2\pi p/q$ when they complete one
longitudinal cycle. The fact that we discussed the flow of $H_2$ instead of the flow of $H$ (or $\Delta H$)
is immaterial for the topology of the level set since the flows commute.

For $h_2 > 0$ in a similar way a flower with $p$ petals appears in the section $p_1 = 0$.

All critical values are degenerate when $q > 2$. This is a crucial difference between the $1:-2$ resonance
and the higher order resonances. In particular the terms $\mu_{ij}$ that were set to zero in the normal form
could completely change the bifurcation diagram.
How to envisage the singularity at the equilibrium point with $h_2 = \Delta h = 0$ is unclear.
Both curled tori limit to this unstable degenerate equilibrium point,
but the above argument breaks down since $|h_2|$ cannot be assumed large compared to $|z_i|$ any more.

\section{Vanishing Twist in the $1:-q$ Resonance }
\label{VTpq}

For the $p:-q$ resonance with $p>1$ the bifurcation diagram is divided into two halves by the critical line $\Delta h = 0$.
We suspect that in this case the twist does not vanish for regular values, but we have not been able to find a proof of this.
However, when $p=1$ the line of critical values stops at the origin, see fig.~\ref{EMpq:Fig1}. This is therefore a typical
case where one can expect vanishing twist to occur, see \cite{DI04} for a general topological proof.
In this particular case using the weighted polar coordinates allows for a simple analytical proof of vanishing twist.

\begin{theorem}
For the $1:-q$ resonance, $q \ge 2$, the twist vanishes near the degenerate equilibrium point on the curve $\Delta h = 0$, $h_{2} > 0$.
\end{theorem}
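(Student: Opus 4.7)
The plan is to work in the weighted polar coordinates $(\rho,\theta)$ of Section~\ref{Tpq}, in which the regular half line $\Delta h=0$, $h_2>0$ corresponds to $\theta=0$, and to locate a critical point of the rotation number $R$ on each small energy surface $H=h>0$. Parametrising the surface by $\theta$, the twist equals $dR/dI_2\bigr|_H=R'(\theta)/I_2'(\theta)$, so it suffices to exhibit a $\theta^*$ close to $0$ with $R'(\theta^*)=0$ and $I_2'(\theta^*)\neq 0$.

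The first step is to invert $H=\rho^2\cos\theta+\mu\rho^{q+1}\sin\theta=h$. Since $q+1>2$, the leading order is $\rho^2=h/\cos\theta$ with a correction of size $h^{(q-1)/2}$; substituting into
\[
R=\tfrac{1}{2\pi}\rho^{1-q}A(\theta)+B_1(\theta)+B_2(\theta)
\]
from \eqref{Teqn} and the rotation number proposition (with $p=1$, $\sigma=-1$) gives
\[
R(\theta)=h^{-(q-1)/2}g(\theta)+C(\theta)+O\bigl(h^{(q-1)/2}\bigr),
\]
where $g(\theta)=(\cos\theta)^{(q-1)/2}A(\theta)/(2\pi)$ and $C(\theta)$ collects $B_1+B_2$ together with the bounded $\theta$-dependent correction arising from the subleading part of $\rho^{1-q}$. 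The second step is the key symmetry observation: for $p=1$ the polynomial $\tilde w^2=\mu^2 q\,x^q(x-\cos\theta)-q^2\sin^2\theta$ depends on $\theta$ only through $\cos\theta$ and $\sin^2\theta$, so $A$, and therefore $g$, is an even smooth function of $\theta$ near $\theta=0$. In particular $g'(0)=0$ while, generically, $g''(0)\neq 0$.

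Differentiating and using $g'(\theta)=g''(0)\theta+O(\theta^3)$ gives, on the side $\theta>0$,
\[
R'(\theta)=h^{-(q-1)/2}g''(0)\,\theta+C'(0^+)+O(\theta)+O\bigl(h^{(q-1)/2}\bigr).
\]
Hence for every sufficiently small $h>0$ the equation $R'(\theta)=0$ has a unique root
\[
\theta^*=-\frac{C'(0^+)}{g''(0)}\,h^{(q-1)/2}+O\bigl(h^{q-1}\bigr),
\]
which is a regular value of the momentum map and tends to $0$ as $h\to 0$. To promote $R'(\theta^*)=0$ to vanishing twist I would compute $I_2'(\theta)$ from Proposition~\ref{Apq:prop1}: in polar coordinates
\[
I_2=\rho^2\Bigl(\tfrac{q+1}{4\pi}\sin\theta\,A(\theta)-\cos\theta\,(B_1+B_2)\Bigr),
\]
so that $I_2'(0^+)=h\,(q+1)A(0)/(4\pi)+O(h)\neq 0$. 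Thus $I_2$ is strictly monotone in $\theta$ at $\theta^*$ and $dR/dI_2\bigr|_H=R'(\theta^*)/I_2'(\theta^*)=0$, yielding vanishing twist at a point whose image under the momentum map accumulates on the critical curve as $h\to 0$.

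The principal obstacle is the non-degeneracy $g''(0)\neq 0$; unlike $g'(0)=0$, it is not enforced by symmetry and needs a quantitative estimate of the hyperelliptic integral $A(\theta)$ at second order. A natural route is a residue/monodromy calculation after the rescaling $x=\cos\theta\cdot y$ that brings $A$ into a one-parameter Legendre-type form, or a convexity argument for the reduced phase-space volume traced out by the period integral. Should $g''(0)$ accidentally vanish for some $q$, the same balancing would be repeated with the first non-zero even Taylor coefficient of $g$, still producing a critical point $\theta^*\to 0$ as $h\to 0$.
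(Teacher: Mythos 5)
Your proposal is, at its core, the same argument as the paper's: both proofs live in the weighted polar coordinates $(\rho,\theta)$, both rest on the factorisation $T=\rho^{2-p-q}A(\theta)$ with the $T$-term dominating $R$, and both reduce the theorem to the parity observation that $A(\theta)$ is even in $\theta$ because $\tilde w^2$ depends on $\theta$ only through $\cos\theta$ and $\sin^2\theta$. The paper packages this differently: it writes the vanishing-twist condition as $\pfrac{R}{\rho}\pfrac{H}{\theta}-\pfrac{R}{\theta}\pfrac{H}{\rho}=0$, approximates $H$ by $H_2$ for small $\rho$, and finds the leading bracket $(p+q-2)A(\theta)\sin\theta-2A'(\theta)\cos\theta$, which is odd in $\theta$ and hence vanishes at $\theta=0$ --- this is exactly your statement $g'(0)=0$ after the substitution $\rho^2=h/\cos\theta$. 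So the part of your argument that actually proves the theorem as stated (the twist vanishes asymptotically along $\theta=0$) is complete and coincides with the paper's.

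Where you go beyond the paper is in trying to locate a genuine zero $\theta^*(h)\to 0$ of the twist on each energy surface via an implicit-function balance, and this is where the acknowledged gap sits: the non-degeneracy $g''(0)\neq 0$ is not established, and the paper does not supply it either (its proof deliberately stops at the leading-order parity statement). Two further small points in that extra step deserve care. First, $B_2$ is discontinuous at $\theta=0$ (Lemma~\ref{Blem}), so $C'(0^+)$ and $C'(0^-)$ need not agree and the root $\theta^*$ you solve for may land on the wrong side of $\theta=0$; you must run the balance on whichever side gives a consistent sign. Second, your estimate $I_2'(0^+)=h(q+1)A(0)/(4\pi)+O(h)$ drops the $\cos\theta\,(B_1'+B_2')$ contribution, which is of the same order $h$; this does not sink the argument, since for the twist you only need \emph{some} regular parameter along the energy surface (e.g.\ $I_1$, whose $\theta$-derivative on $H=h$ is $-\mu\,d(\rho^{q+1}\sin\theta)/d\theta\neq 0$ near $\theta=0$), but as written the claim $I_2'(\theta^*)\neq 0$ is not justified.
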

\begin{proof}
By definition the twist vanishes if the rotation number $R$ has a critical point on the energy surface, i.e.\ when
$R$ does not change from torus to torus. On the energy surface this is written as $\partial R/\partial I_1|_{H = const}$.
In the image of the energy momentum map the condition is satisfied when the lines of constant rotation number and
the lines of constant energy are tangent to each other. This implies that $\nabla R$ and $\nabla H$ are parallel.
The gradients can be computed in any coordinate system, and we choose the weighted polar coordinates.
Thus the condition for vanishing twist is
\begin{equation*}
	\pfrac{R}{\rho} \pfrac{H}{\theta} - \pfrac{R}{\theta} \pfrac{H}{\rho} = 0 \, .
\end{equation*}
Now $H = H_2 + \mu \Delta H$ where $H_2$ is order 2 in $\rho$ and $\Delta H$ is order $p+q$ in $\rho$.
For small $\rho$ we can therefore neglect $\Delta H$, and find
$\partial H/ \partial \rho \approx  2 \rho \cos \theta$, $\partial H/\partial \theta \approx  -\rho^2 \sin \theta$.
Recalling the factorisation  $T = \rho^{-(p+q)+2}A(\theta)$ from \eqref{Teqn} we find
\begin{align*}
	  - \sigma 2 \pi \pfrac{R}{\rho} &= \pfrac{T}{\rho} = \left( -(p+q)+2 \right)
	  \rho^{-(p+q)+1} A(\theta) , \\
	- \sigma 2 \pi \pfrac{R}{\theta} &= \pfrac{T}{\theta} + {B_{1}'}(\theta) +{B_{2}'}(\theta) \approx \rho^{-(p+q)+2} A'(\theta) \,.
\end{align*}
The leading order contribution comes from $T$ alone. Altogether this gives
\begin{equation*}
\label{VTpq:Equ1}
	- \sigma 2 \pi \left( \pfrac{R}{\rho} \pfrac{H}{\theta} - \pfrac{R}{\theta} \pfrac{H}{\rho} \right) =
	\rho^{-(p+q)+3} \left( (p+q-2) A(\theta) \sin \theta - 2 A'(\theta) \cos \theta
	+ O(\rho)\right)
\end{equation*}
up to lowest order in $\rho$. The solution $\theta = 0$ follows from symmetry:
$A(\theta)$ is even in $\theta$ thus $A'(\theta)$ is odd in $\theta$, and thus both terms are odd in $\theta$.
For $p = 1$ the line $\theta = 0$ is a line of regular values, and therefore we have shown that the twist
vanishes along $\theta = 0$ asymptotically near the origin.
\end{proof}


\subsection*{Acknowledgements}

The authors would like to thank the Department of Applied Mathematics at the University of Colorado, Boulder for its hospitality in 2006/7 where this work was completed.
HRD was supported in part by a Leverhulme Research Fellowship.

\bibliographystyle{plain}
\bibliography{pqResonance}

\end{document}